\UseRawInputEncoding

\documentclass[10pt,letterpaper]{article}
\usepackage[top=0.85in,left=2.75in,footskip=0.75in]{geometry}

\usepackage{amsmath,amssymb}

\usepackage{changepage}

\usepackage{textcomp,marvosym}

\usepackage{cite}


\usepackage{microtype}
\DisableLigatures[f]{encoding = *, family = * }

\usepackage[table]{xcolor}

\usepackage{array}

\usepackage{amsfonts}
\usepackage{amsthm}
\usepackage{stmaryrd}
\usepackage{mathrsfs}
\usepackage{float}
\usepackage{psfrag}
\usepackage{multirow}
\usepackage{multicol}
\usepackage{lipsum}
\usepackage{mwe}
\usepackage{epsfig, epic, eepic, epsf}
\usepackage{chemarr}
\usepackage{subcaption}
\usepackage[overload]{empheq}
\usepackage{threeparttable}
\usepackage{changes} 
\usepackage{overpic}
\usepackage{siunitx}
 \usepackage{pdfpages}
 \usepackage{appendix}
 \usepackage{siunitx}

\newcolumntype{+}{!{\vrule width 2pt}}

\newlength\savedwidth

\newcommand\thickhline{\noalign{\global\savedwidth\arrayrulewidth\global\arrayrulewidth 2pt}%
\hline
\noalign{\global\arrayrulewidth\savedwidth}}


\raggedright
\setlength{\parindent}{0.5cm}
\textwidth 5.25in 
\textheight 8.75in

\usepackage[aboveskip=1pt,labelfont=bf,labelsep=period,justification=raggedright,singlelinecheck=off]{caption}

\newtheorem{thm}{Theorem}[section]

\newtheorem{lem}[thm]{Lemma}
\newtheorem{prop}[thm]{Proposition}

\usepackage[utf8]{inputenc}
\usepackage{algorithm}
\usepackage{algorithmic}
\usepackage{verbatim}

\bibliographystyle{plos2015}

\makeatletter
\renewcommand{\@biblabel}[1]{\quad#1.}
\newenvironment{breakablealgorithm}
{
	\begin{center}
		\refstepcounter{algorithm}
		\hrule height.8pt depth0pt \kern2pt
		\renewcommand{\caption}[2][\relax]{
			{\raggedright\textbf{\ALG@name~\thealgorithm} ##2\par}%
			\ifx\relax##1\relax 
			\addcontentsline{loa}{algorithm}{\protect\numberline{\thealgorithm}##2}%
			\else 
			\addcontentsline{loa}{algorithm}{\protect\numberline{\thealgorithm}##1}%
			\fi
			\kern2pt\hrule\kern2pt
		}
	}{
		\kern2pt\hrule\relax
	\end{center}
}
\makeatother

\usepackage{lastpage,fancyhdr,graphicx}
\usepackage{epstopdf}
\pagestyle{fancy}
\fancyhf{}
\rfoot{\thepage/\pageref{LastPage}}

\fancyheadoffset[L]{2.25in}
\fancyfootoffset[L]{2.25in}
\lfoot{\today}



\begin{document}
\vspace*{0.2in}

\begin{flushleft}
{\Large
\textbf\newline{Patch formation driven by stochastic effects of interaction between viruses and defective interfering particles} 
}
\newline
\\
Qiantong Liang\textsuperscript{1},
Johnny Yang\textsuperscript{2},
Wai-Tong Louis Fan\textsuperscript{2},
Wing-Cheong Lo\textsuperscript{1*}
\\
\bigskip
\textbf{1} Department of Mathematics, City University of Hong Kong, Hong Kong, Hong Kong, China
\\
\textbf{2} Department of Mathematics, Indiana University, Bloomington, IN, United States
\\
\bigskip

%
%





* wingclo@cityu.edu.hk

\end{flushleft}
\section*{Abstract}
Defective interfering particles (DIPs) are virus-like particles that occur naturally during virus infections. These particles are defective, lacking essential genetic materials for replication, but they can interact with the wild-type virus and potentially be used as therapeutic agents.
However, the effect of DIPs on infection spread is still unclear due to complicated stochastic effects and nonlinear spatial dynamics. In this work, we develop a model with a new hybrid method to study the spatial-temporal dynamics of viruses and DIPs co-infections within hosts. We present two different scenarios of virus production and compare the results from deterministic and stochastic models to demonstrate how the stochastic effect is involved in the spatial dynamics of virus transmission.
We quantitatively study the spread features of the virus, including the formation and the speed of virus spread and the emergence of stochastic patchy patterns of virus distribution. Our simulations simultaneously capture observed spatial spread features in the experimental data, including the spread rate of the virus and its patchiness. The results demonstrate that DIPs can slow down the growth of virus particles and make the spread of the virus more patchy.

\section*{Author summary}
Defective interfering particles (DIPs) are viral mutants in which a crucial part of the particle's genome has been lost. DIPs are not infectious but can still co-infect cells with natural viruses. Such mutations are not uncommon. In fact, it has been found in most classes of viruses, including SARS coronavirus and influenza virus. It gives DIPs a promising future as a medium for disease treatment. However, the mechanism by which DIPs affect virus transmission remains unclear. In this paper, we develop a model to study the interaction between viruses and DIPs within host cells and the role stochastic effects play in virus transmission. Our simulations can capture patchy patterns and other spatial spread features observed in experiments and demonstrate that DIPs can slow down the growth of virus particles and make the spread of viruses more patchy.

\section*{Introduction}
Many diseases such as COVID-19, Ebola virus disease, AIDS, and SARS, are caused by the transmission of viruses. 
Various antiviral drugs have been proposed to inhibit the gene and protein functions of viruses. Still, a major challenge in drug development is caused by occasional mutations in the viral genomes. However, some of these mutations
may help us create a new type of treatment through developing defective interfering particles (DIPs), 
which are virus-like particles that have been detected in patients infected with influenza A virus~\cite{Frensing2015}, and with dengue virus, as well as birds infected with West Nile virus~\cite{Pesko2012}. DIPs lack some viral genes that are essential for replication. But, when they co-infect a cell with viable viruses, DIPs divert replication or packaging resources from the virus towards their own growth, thereby compromising normal virus growth~\cite{Baltes2017, Frensing2015}. 
The competition between infectious viruses and DIPs for the resources in a host may induce a delay and decrease in infectious virus production~\cite{akpinar2016high, Baltes2017}. For example, 
in the recent work~\cite{ rezelj2021defective}, a combined experimental evolution and computational approach identified
 defective viral genomes that optimally interfere with Zika virus infection and show antiviral activity in mice and mosquitoes.
Therefore, DIPs interfere with virus production, a feature that underscores their promise as therapeutic agents~\cite{dropulic1996conditionally, noble2004interfering, li2021dengue, rezelj2021defective}. 

In a recent experimental study~\cite{Baltes2017}, engineered reporter viruses and DIP were constructed, which enabled measurement of the gene expressions of both viral and DIP during co-infection of susceptible host cells. Quantitative microscopy imaging in~\cite{Baltes2017}
demonstrated that levels of virus and DIP production from co-infected cells can be highly sensitive to their input ratios (multiplicities of infection, MOI), and revealed diverse spatial patterns during co-infection spread.
The experimental results showed that viral gene expression was more delayed and that patterns of spread became more ``patchy" with a higher level of DIPs to the initial cell. However, it is not clear that how the timing and level of this spatial distribution of DIP expression are related to the spread of virus infection, and what are the key mechanisms responsible for the diverse spatial patterns of the virus and DIP levels.

Many mathematical models were built to study the growth of virus~\cite{Wodarz2012, Saenz2010, Getto2008, Perelson2002, Heldt2015, Perelson2013, Pawelek2012, Graw2016, Whitman2020, Yin2018} and the interaction of DIPs and viruses~\cite{Kirkwood1994, Frank2000, Akpinar2016a, Laske2016, Mapder2019, Saxena2018}. The simulations and analyses provide us a theoretical idea to understand the development of infectious diseases and how to control the growth of viruses. 
For example, in~\cite{Kirkwood1994}, a simple mathematical model was proposed for studying the deterministic chaos caused by DIPs. However, there are not many models considering the spatial effect of the interaction of DIPs and viruses in a one- or two-dimensional domain. 
Frank~\cite{Frank2000} developed a one-dimensional partial differential equation model for studying the dynamics of the populations of DIPs and viruses within hosts. 
His work studied how the dynamics of virus spreading depend on the rate at which killed host cells are replaced. These results explain the key processes that control the diversity of observed experimental outcomes and provide a stepping stone to study the spatial model of the transmissions of DIPs and viruses. A two-dimensional domain has to be considered for reproducing the patchy pattern. Akpinar et al.~\cite{Akpinar2016a} built a two-dimensional computational model, adapting a cellular automaton approach to incorporate kinetic data on virus growth, but the model is not able to capture the spread rate and the spatial patterns simultaneously observed in~\cite{Baltes2017}.

The existing computational studies provide a keystone for modeling the interaction of DIPs and wild-type viruses. However, the mechanism by which DIPs affect the spatial distribution of virus expression is still unclear partly due to complicated stochastic effects and nonlinear spatial dynamics. In~\cite{Pearson2011}, the authors applied a stochastic model to study different solutions for continuous and burst production of virions which cannot be studied through deterministic models. In~\cite{Immonen2012}, a hybrid stochastic-deterministic computational model was applied to capture experimentally observed variation in the fitness difference between two virus strains. The simulations of the model suggest a way to minimize the variation and dual infection in experiments. In~\cite{Clark2011}, a stochastic model was built to study the effect of DIPs and the results support that DIPs have a slowing effect on the growth of viral plaques, but the spread features are not quantified in that study.
These computational studies suggest that stochastic effects play an important role in virus spreading, but the stochastic effects in the virus and DIP transmissions are poorly understood. It inspires us to build a stochastic spatial model to study the interaction of DIPs and viruses and how the effect of DIPs leads to patchy patterns of virus expression observed in experiments.

In this paper, we develop and analyze a new mathematical model to study the spreading speed and the spatial pattern generated by the interaction of viruses and DIPs. To incorporate the random movements of the virus and the DIPs and the stochastic effect of the interactions due to finite number of particles, we developed a stochastic reaction-diffusion system for the virus and DIP co-infection and built a hybrid method for stochastic simulation. Our stochastic model enables also the study and comparison of two common scenarios of virus production. Our simulation results demonstrated that this model can regenerate simultaneously the patchy patterns and the spread rates observed in wet-lab experiments~\cite{Baltes2017}, which was not achieved in previous studies~\cite{Akpinar2016a}.

\section*{Modeling}
Our new hybrid model is developed based on the deterministic reaction-diffusion model introduced by Frank~\cite{Frank2000}, but has several differences and new features. Importantly, our model and simulation results capture spatial spread features in two-dimensions observed in experiments and overcome computational challenges in stochastic simulations in two-dimensional domains, while the results in~\cite{Frank2000} are for one-dimension. Furthermore, we introduced and compared two different scenarios of virus production in the stochastic simulations.

Below we describe firstly the deterministic part of our model which is a system of partial differential equations, and secondly our stochastic model that incorporates two different scenarios of virus production.

\subsection*{Deterministic model}

Based on the model in~\cite{Frank2000}, we propose a new model which includes the virus and DIP productions. As shown in Fig~\ref{system}, in the model, we consider free natural infectious virus, denoted by $V(t,\vec{x})$, and defective interfering particles (DIPs), denoted by $D(t,\vec{x})$ where $\vec{x}=(x_1,x_2)$ is a vector which represents a spatial location in a two-dimensional domain $[0, x_{1\max}]\times [0, x_{2\max}]$. Also, there are six types of cells: uninfected cells, cells infected only by natural viruses but not in the period of virus production, cells infected only by natural viruses and in the period of virus production, cells infected by DIPs only, cells infected by DIPs and natural viruses but not in the period of virus production, and cells infected by both DIPs and natural viruses as well as in the period of virus production. The numbers of the respective cells are denoted by $C$, $C_V$, $C_V^*$, $C_D$, $C_{VD}$ and $C^*_{VD}$, respectively. 

\begin{figure}[!ht]
\includegraphics[width=0.85\textwidth]{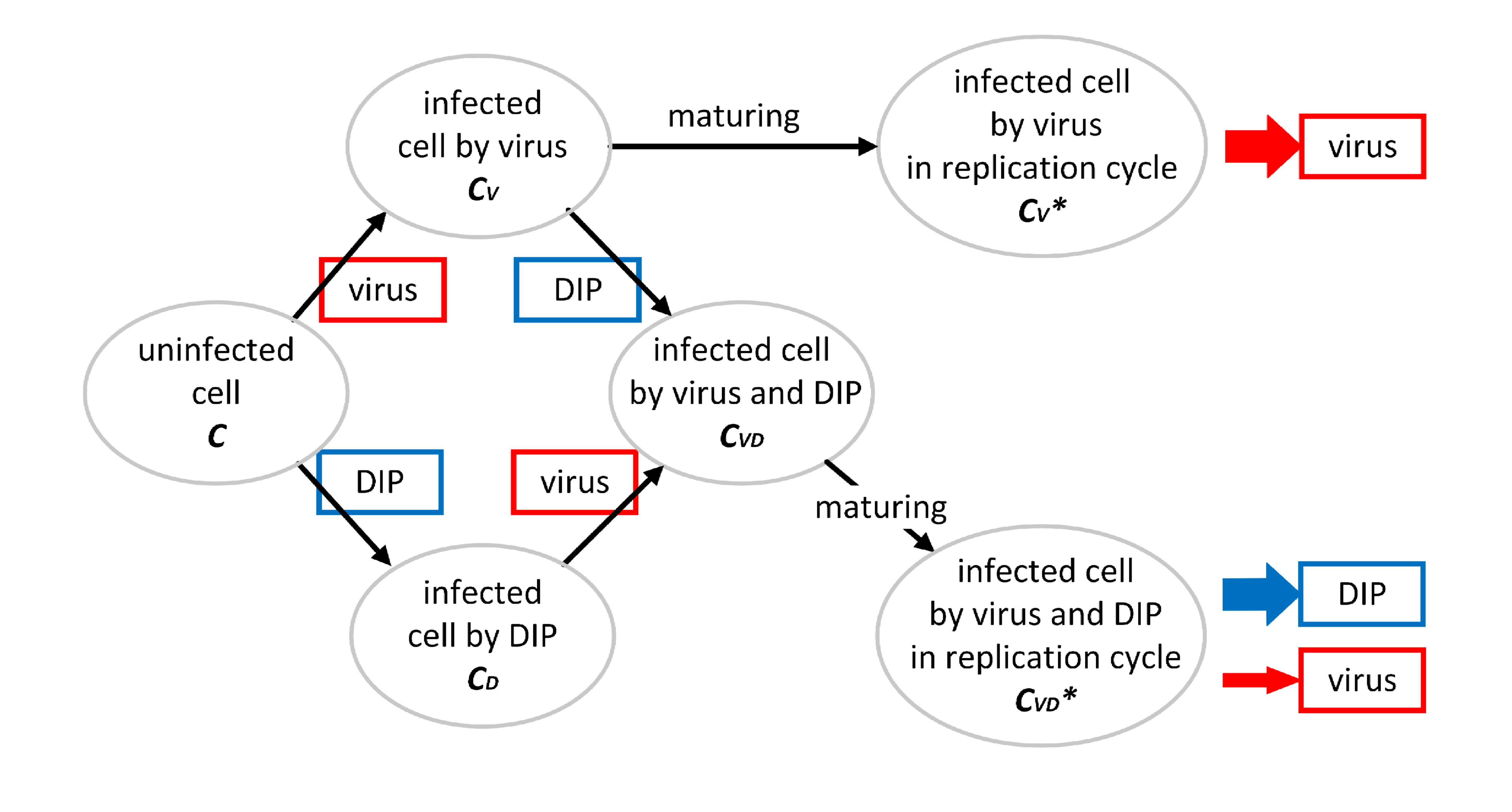}
\caption{{\bf Schematic diagram of the virus-DIPs system.}}
\label{system}
\end{figure}

Considering infection by DIPs of cells late in the replication cycle is too late to affect the production of the virus, we assume that DIPs cannot infect the cells $C^*_V$~\cite{Frank2000}.
There are two age categories for each of the $C_V$ and the $C_{VD}$ cells in our model. Ultimately, the DIPs are produced only by the mature $C^*_{VD}$ cells, and virus particles are produced by both $C^*_{V}$ and $C^*_{VD}$ cells. While $C_V$ cells can get to maturity by themselves and produce virus, $C_D$ cells \textit{cannot} produce DIP unless they are co-infected by the virus and become $C_{VD}$ and get to maturity. The latter models the situation that DIPs cannot replicate unless they co-infect a cell with a wild-type virus.

The following two equations are for modeling the dynamics of the virus and DIP:
\begin{eqnarray}
\begin{aligned}
\frac{\partial V}{\partial t} &= \underbrace{d_V \nabla^2 V}_{\text{Diffusion}} + \underbrace{\alpha_1 C^*_V + \alpha_2 C^*_{VD} }_{\text{virus production}} - \underbrace{\delta_V V}_{\text{clearance of virus}}, \\
\frac{\partial D}{\partial t} &= \underbrace{d_D \nabla^2 D}_{\text{Diffusion}} + \underbrace{\alpha_3 C^*_{VD}}_{\text{DIP production}} - \underbrace{\delta_D D}_{\text{clearance of DIP}},
\end{aligned}
\label{equation1}
\end{eqnarray}

\noindent
where $\nabla^2$ is the Laplacian operator, describing the virus and DIP diffusion.

We assume that cells are not moving in the spatial domain and we can model the dynamics of the cell densities by the following system:
\begin{eqnarray}
\begin{aligned}
    \frac{\partial C_V^*}{\partial t} &=\underbrace{\nu_1 C_V}_{\text{maturing for virus production}} - \underbrace{\beta_1 C_V^*}_{\text{cell death}},\\
    \frac{\partial C_{VD}^*}{\partial t}&=\underbrace{\nu_2 C_{VD}}_{\text{maturing for virus production}}-\underbrace{\beta_2C_{VD}^*}_{\text{cell death}},\\
\frac{\partial C}{\partial t} &=  \underbrace{\alpha_C C \left(1- C_T/K\right)}_{\text{Cell growth}}-  \underbrace{\gamma_1 CV}_{\text{infected by normal viruses}} - \underbrace{\gamma_2 CD}_{\text{infected by DIPs}} -\underbrace{\delta_C C}_{\text{cell death}},\\
\frac{\partial C_V}{\partial t} &= \underbrace{\gamma_1 CV}_{\text{infected by normal viruses}} - \underbrace{\gamma_2 C_V D}_{\text{infected by DIPs}}  - \underbrace{\nu_1C_V}_{\text{maturing for virus production}}  - \underbrace{\delta_{CV} C_V}_{\text{cell death}}, \\
\frac{\partial C_D}{\partial t} &=   \underbrace{\gamma_2 CD}_{\text{infected by DIPs}} - \underbrace{\gamma_1 C_D V}_{\text{infected by normal viruses}} - \underbrace{\delta_{CD} C_D}_{\text{cell death}},\\
\frac{\partial C_{VD}}{\partial t} &= \underbrace{ \gamma_2 C_V D}_{\text{infected by DIPs} }+ \underbrace{\gamma_1 C_D V}_{\text{infected by normal viruses}}  -  \underbrace{\nu_2C_{VD}}_{\text{maturing for virus production}} - \underbrace{\delta_{CVD} C_{VD}}_{\text{cell death}},
\end{aligned}
\label{equation1_1}
\end{eqnarray}
where $C_T = C+C_V+C_D+C_{VD}+C^*_V +C^*_{VD}$ is the total density of all cells.

Our model Eqs~(\ref{equation1})-Eqs~(\ref{equation1_1}) is different from that of~\cite{Frank2000} in several ways: (i) there are two age categories for $C_{VD}$ cells in our model, but there is no age structure for $C_{VD}$ cells in~\cite{Frank2000}; (ii) the mature $C^*_{VD}$ cells can produce virus in our model, but the $C_{VD}$ cells in~\cite{Frank2000} cannot produce virus; (iii) $C_D$ cells cannot recover to be uninfected cells in our model, but they can recover in~\cite{Frank2000}; (iv) our parameters $\gamma_1$ and $\gamma_2$ can be different, but they are the same in~\cite{Frank2000}.

\subsection*{Stochastic models in two different scenarios of virus production}
Since the outcome of the model with DIPs is sensitive to the competition between viruses and DIPs, different kinds of perturbation to the production of viruses and DIPs may contribute to a huge change in the probability distribution of the outcome. The study in~\cite{Pearson2011} suggested that there are two scenarios of virus production, which can create different kinds of perturbations to virus production:

\begin{enumerate}
    \item[] Scenario 1:$\;$ infected cells produce virus and DIPs through cell bursting;
    \item[] Scenario 2:$\;$ infected cells keep producing viruses and DIPs continuously.
\end{enumerate}

However, these two scenarios cannot be distinguished by our deterministic PDE model~\cite{Pearson2011} as both models with different scenarios have identical mean-field kinetics. In this study, we built a stochastic model and developed an efficient simulation method to examine the effects on the spatial distribution of viruses under different scenarios.

Due to the high computational cost of the spatial stochastic model, there are not many studies considering the effects of different scenarios for virus production on the spreading speed and distribution of the virus. To improve the computational efficiency, here we simulate our model with Spatial Stochastic Simulation Algorithm (SSA)~\cite{smith2018spatially}, which is a method to generate an exact sample from the probability mass function that is the solution of the chemical master equation.

In SSA, we consider the spatial domain as a two-dimensional square with length $L$. The domain is partitioned into $N_c \times N_c$ identical compartments that are uniform squares with length $h=L/N_c$. The subsystem in each compartment is assumed to be homogeneous. The same types of particles and cells in different compartments are treated as different species; for example, we denote by $V_{i,j}$ the virus level in the compartment at location $(i,j)$ and consider $\{V_{1,1},\cdots, V_{1, N_c}, V_{2,1},\cdots, V_{2, N_c},\cdots, V_{N_c, N_c}\}$. Diffusion is treated as a reaction in which a molecule jumps to one of its neighboring compartments at a constant rate. Then with no-flux boundary conditions (or other conditions which depend on the experimental setting), diffusive jumps obey the following chain reactions for each $j\in\{1,2,\cdots, N_c\}$:

{
\begin{align*}
V_{1,j}\xrightleftharpoons[\rho_1]{\rho_1} V_{2,j}\xrightleftharpoons[\rho_1]{\rho_1}V_{3,j}\cdots \xrightleftharpoons[\gamma]{\rho_1}V_{N_c,j},\ V_{j,1}\xrightleftharpoons[\rho_1]{\rho_1} V_{j,2}\xrightleftharpoons[\rho_1]{\rho_1}V_{j,3}\cdots \xrightleftharpoons[\gamma]{\rho_1}V_{j,N_c},
\end{align*}}
where $\rho_1=d_V/h^2$. We assume that $D_{i,j}$ has similar chain reactions with $\rho_2=d_D/h^2$. We define the {\it propensity
function} for the jumps, for example, at the location $(i,j)$, for the four types of jumps (L: left, R: right, U: up, D: down) of virus:
$\alpha_{LV_{i,j}}(t)=\rho_1 V_{i,j}(t)$, $\alpha_{RV_{i,j}}(t)=\rho_1 V_{i,j}(t)$, $\alpha_{UV_{i,j}}(t)=\rho_1 V_{i,j}(t)$, and $\alpha_{DV_{i,j}}(t)=\rho_1 V_{i,j}(t)$. At the boundary, some jumping directions will not be considered for no-flux boundary conditions. 
For reactions, we assume that only molecules in the same compartment can react with each other. 

Different scenarios of virus production will contain different sets of reactions. 
In the first scenario, the reactions in the $(i,j)$ compartment are as follows:
\begin{flalign*}
&\phi \xrightarrow{\alpha_C C (1-C_T/K)} C,\ C \xrightarrow{\gamma_1V} C_V,\ C \xrightarrow{\gamma_2D} C_D,\ C_V \xrightarrow{\gamma_2D} C_{VD}, \ C_D \xrightarrow{\gamma_1V} C_{VD},\\
&C_V \xrightarrow{\nu_1} C_V^*,\ C_{VD} \xrightarrow{\nu_2} C_{VD}^*,\ D \xrightarrow{\delta_D} \phi,\ V \xrightarrow{\delta_V} \phi,\ \\
&C \xrightarrow{\delta_C} \phi,\ C_V \xrightarrow{\delta_{CV}} \phi,\ C_D \xrightarrow{\delta_{CD}} \phi,\ C_{VD} \xrightarrow{\delta_{CVD}} \phi,\ \\
&C_V^* \xrightarrow{\beta_1} (\alpha_1/\beta_1)V, \ C^*_{VD} \xrightarrow{\beta_{2}}  (\alpha_2/\beta_{2})V+ (\alpha_3/\beta_{2})D.
\end{flalign*}

In the second  scenario, the reactions (the first three rows of the previous scenario) are the same as the first one except for the production of viruses and DIPs. That is, we replace the last row by the following:
\begin{flalign*}
&\phi \xrightarrow {\alpha_1 C_V^*+\alpha_2 C_{VD}^*} V,  \phi \xrightarrow {\alpha_3 C_{VD}^*}D,\ C_V^* \xrightarrow{\beta_1} \phi,\ C^*_{VD} \xrightarrow{\beta_{2}} \phi.
\end{flalign*}

\section*{A new hybrid method for stochastic simulation}
In general, the computational cost for a stochastic simulation of a system in two-dimensional domain is extremely high. To reduce the computational cost and maintain the accuracy, we built up a new hybrid method which combines the advantages of our previous works: method of operator splitting~\cite{lo2016hybrid}, and spatially coupled hybrid method with adaptive interface~\cite{lo2019hybrid}.
In the new method, we use operator splitting to improve the efficiency and maintain the accuracy of the simulation; also, through this method with mixing stochastic and deterministic methods, we can apply the hybrid method for specific reactions while keeping others deterministic and hence consider only part of random effects to study which stochastic behavior plays an essential role in the pattern formation. 

The hybrid method combines two classes of simulation methods for modeling the reaction processes at two different scales. To capture the advantages of the methods with different scales, we use the method in our previous work~\cite{lo2019hybrid} to separate the spatial compartments into two types of regions with adaptive interfaces: 
1) the regions with ``large" numbers of molecules; 2) the regions with ``small" numbers of molecules. A more precise criteria for determining ``large" and ``small" will be given in Eq~(\ref{criterion}).

To better adapt to the complex system, we separate the compartments for each operator independently. That is, only the number of molecules of the species involved in an operator is considered in the regional division of that operator. 
We then apply SSA to approximate the dynamics in the region (1), and apply the PDE approximation in the region (2). For coupling two regions, we will apply the pseudo-compartment method~\cite{yates2015pseudo} with the adaptive interface method we used in~\cite{lo2019hybrid} in which the locations of the interfaces between different approaches are changing according to the distribution of molecules. With the idea of operator splitting mentioned above~\cite{lo2016hybrid}, our method can provide a numerical framework for studying the spatial stochastic effect of virus transmission. Through this new tool, we will have an efficient method to gain a quantitative understanding about the spatial effect of DIPs in virus transmission.

\subsection*{The domain and multiple interfaces for different reactions}
Consider a general reaction-diffusion system of $S$ species and $M$ chemical reactions and diffusion in 4 directions in a two-dimensional domain $\Omega$, which is partitioned into $N_c$ regular compartments of width $h$. Let $N_s(k,t)$ represent the amount of the $s$-th species in the $k$-th compartment at time $t$. Each compartment is small enough so individuals in it can be assumed well mixed.

The subdomain in which we employ the compartment-based regime for the $j$-th reaction at time $t$ is denoted by $\Omega_C^j(t)\subset \Omega$, and the other part of $\Omega$ that employs PDE is represented by $\Omega_P^j(t)$. $\Omega_C^j(t)$ contains all compartments in which the amount of at least one of the reactants in the $j$-th reaction is below the threshold value $\theta$.
To be specific, assume that reactants of the $j$-th reaction are $\{S_1, S_2,\cdots, S_m\}$. If 
\begin{eqnarray}\label{criterion}
\min_{i=1,2,\cdots,m}\{N_i(k,t)\}<\theta,
\end{eqnarray}
then the $k$-th compartment is assigned to the stochastic domain $\Omega_C^j(t)$, otherwise to the PDE domain $\Omega_P^j(t)$. 
In our algorithm, interfaces are adaptive. Domain division and multiple interfaces $I^j=\overline{\Omega_P^j(t)}\cap \overline{\Omega_C^j(t)}$ are updated every $\Delta t_I$. Fig~\ref{diagram} shows a one-dimensional illustration of the approach stated above.

\begin{figure}[!ht]
\includegraphics[width=0.85\textwidth]{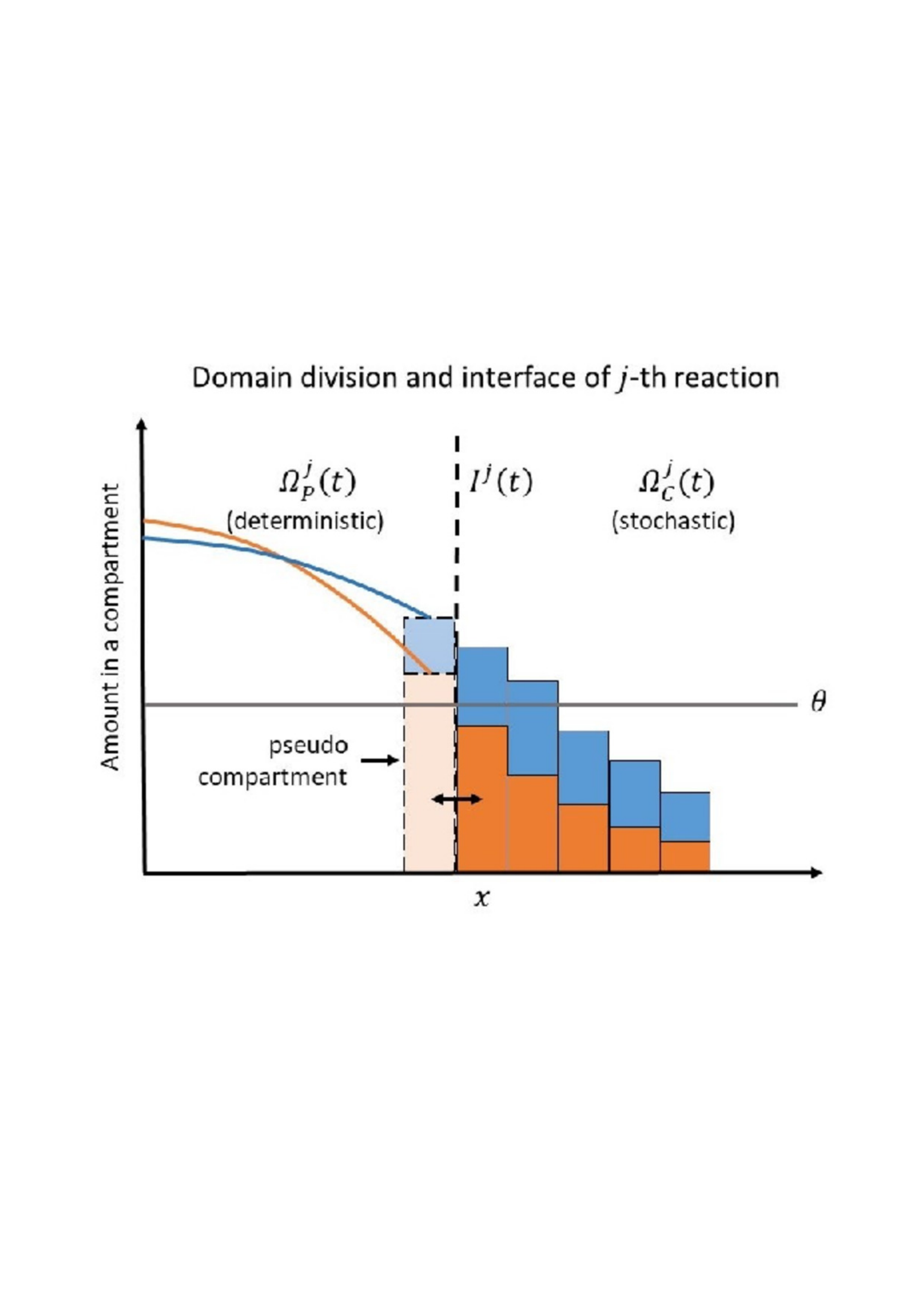}
\caption{{\bf An illustration of the domain division and interface of the $j$-th reaction.}}
Here we show an example with two reactants. The domain $\Omega_P^j(t)$ is modeled by PDE and the domain $\Omega_C^j(t)$ is modeled by compartment-based SSA. The amount of each species in a compartment in $\Omega_P^j(t)$ is $\int_{C_{k}}p_s(x,t)d x$.
If any reactant amount is below the threshold $\theta$, then that compartment is part of $\Omega_C^j(t)$.
Individuals can move between the boundary compartment of $\Omega_C^j(t)$ and the pseudo-compartment in $\Omega_P^j(t)$. In the two-dimensional case, diffusion takes four directions: up, down, left and right.
\label{diagram}
\end{figure}

It's worth noting that $I^{j_1}$ and $I^{j_2}$ can be the same if there is an inclusion relationship between the sets of reactants in the $j_1$-th and $j_2$-th reaction. In fact, the number of non-coincident interfaces is no more than the total number of species in the system. Therefore, compared with a single interface, multiple interfaces can capture stochastic fluctuations more accurately without increasing too much computation costs.

\subsection*{The pseudo-compartment method}
In this section, we will outline the pseudo-compartment method~\cite{yates2015pseudo}, which is the basis of our algorithm. In~\cite{yates2015pseudo}, Yates et al. introduced the pseudo-compartment method for diffusion. On this basis, we propose the possibility of multiple adaptive interfaces.

Consider a reaction-diffusion system of $S$ species, $M$ chemical reactions and diffusion in the four directions of the cross in a 2D domain $\Omega$.
In our algorithm, the PDE region varies for each reaction. So instead of just dividing the PDE based domain, we discretize the whole domain, $\Omega$, into a regular grid with spacing $\Delta x$. We consider the density of each species. For the $j$-th reaction at time $t$, the PDE numerical solution is updated for all grid points lie in $\Omega_P^j(t)$. Diffusion terms are treated in a similar way, but employing the implicit Euler method.
A zero-flux boundary condition is implemented in $\Omega_P^j(t)$, including domain boundaries and interfaces. Flux at the interface is implemented in the compartment-based regime.

The compartment-based regime evolves from the Gillespie algorithm (SSA). Consider the propensity function of reactions and diffusion, $\alpha_{i,j}(t)$, for compartment $C_i\subset\Omega_C^j(t)$. $\alpha_{i,j}(t)d t$ represents the probability that the $j$-th reaction (for $j\in\{1,\cdots, M+4\}$, including diffusion) occurs in $C_i$ during the small time interval $[t, t + d t]$. 

The coupling is implemented with a pseudo-compartment, $C_{-1}$, presented for diffusion between the deterministic and stochastic domains. This is a compartment adjacent to the interface but within deterministic domain $\Omega_P^j(t)$, where $j\in\{M+1,\cdots, M+4\}$, representing diffusion (four directions of the cross).
In order to correctly model the flux over the interface, individuals in the boundary compartment in $\Omega_C^j$ can jump into the pseudo-compartment with the usual diffusive rate, and vice versa.
The amount of each species within the pseudo-compartment is calculated through direct integration of the PDE,
\begin{eqnarray}
N_s(-i,t)=\int_{C_{-1,i}}p_s(x,t)d x,
\end{eqnarray}
where $p_s(x,t)$ is the PDE solution of density of the $s$-th species.
Then the propensity function for jumping from the pseudo-compartment to the adjoining compartment in $\Omega_C^j(t)$ is given by
\begin{eqnarray}
\alpha^*_{i,j}(t)=\frac{N_s(-i,t) D_s}{h^2}=\frac{D_s}{h^2}\int_{C_{-1,i}}p_s(x,t)d x, \quad j=M+1,\cdots,M+4.
\end{eqnarray}

The Gillespie's direct method~\cite{gillespie1977exact} is used to simulate the time evolution of stochastic regime. The time interval for next reaction, $\tau$, is determined by:
\begin{eqnarray}
\left\{\begin{aligned}
    &\alpha_0=\sum_{j=1}^{M+4}\sum_{C_i\in\Omega_C^j(t)}\alpha_{i,j}(t)+\sum_{j=M+1}^{M+4}\sum_{C_{-1,i}\in\Omega_P^j(t)}\alpha^*_{i,j}(t),\\
    &\tau=\frac{1}{\alpha_0}\ln{\frac{1}{r_1}},
    \end{aligned}
\ \right.
\label{tau}
\end{eqnarray}
where $r_1$ is a random variable uniformly distributed in $(0, 1)$. Use the SSA with the second random number $r_2$ to find the corresponding reaction or jump.
The algorithm then checks whether the closer update time is for PDE or SSA. If $t+\tau<t_P$, then the update is for SSA and $t=t+\tau$; otherwise it is for PDE and $t=t_P$, $t_P=t+\Delta t_P$.

\subsection*{Moving interface}
The multiple interfaces are updated with time step $\Delta t_I$, by recomparing amounts in a compartment of all reactants of each reaction with the threshold $\theta$. 
Similar to~\cite{spill2015hybrid}, after the interfaces are updated, we need to keep numbers in the stochastic domain are integer values, but we cannot simply get rid of the fractional parts.
Suppose the compartment $C_k$ is moved from the PDE domain to the stochastic domain, and the fractional part is
\begin{eqnarray}
P=\left\{\int_{C_{k}}p(x,t)dx \right\},
\end{eqnarray}
where $\{\cdot\}$ is the fractional part function. $P$ is used as the probability that an additional individual is kept in this compartment. We then take a uniform random number $r\in[0,1]$. If $r<P$ then we place the individual in compartment $C_k$; otherwise it is placed in the deterministic domain. 

The pseudocode for our algorithm is given in Algorithm~\ref{multi}.
\begin{breakablealgorithm}
\caption{} 
\label{multi}
\begin{enumerate}
    \item Initialize the time, $t = t_0$ and set the final time, $T$. Specify the PDE-update time step $\Delta t_P$ and initialize the next PDE time step to be $t_{P} = t + \Delta t_P$. Specify the interface-update time step $\Delta t_{I}$ and initialize the next interface-update time step to be $t_{I} = t + \Delta t_{I}$.
    \item Specify the PDE spacial step $\Delta x$ and the compartment width $h$. Initialize the amount of each species in each compartment, $N_s(k, t)$ for $k\in \{1,\dots,K\}$ and specify the threshold $\theta$. Compute the density, $p_s(x, t)=N_s(k, t)/h$ for PDE grid points.
    \item Determine the initial interface for each reaction $j$, $j\in\{1,2,\cdots, M\}$:
        \label{interface}
        \begin{enumerate}
            \item Find all $k$ such that $\min_{s\in S_j}\{N_s(k,t)\}<\theta$, where $S_j$ contains all species involved in reaction $j$, then the $k$-th compartment is part of the stochastic domain $\Omega^j_{C}$, and otherwise part of the PDE domain $\Omega^j_{P}$.
            \item All compartments adjacent to $\Omega^j_{C}$ (no diagonal angles) are regarded as pseudo compartments.
        \end{enumerate}
    \item Determine the time for the next ‘compartment-based’ event according to the Gillespie algorithm, $t_C = t + \tau$.
        \label{return}
    \item If $\min\{t_C, t_P,t_{I}\}=t_C$ then the next compartment-based event occurs:
        \begin{enumerate}
            \item Determine which event occurs according to the Gillespie algorithm.
            \item If the event is moving from stochastic domain to a pseudo compartment, $C_{-1}$, then for the corresponding $(s,k)$, $N_s(k,t+\tau)=N_s(k,t)-1$ and $p(x,t+\tau)=p(x,t)+\mathbb I_{[x\in C_{-1}]}/h$. Here, $\mathbb I_{[x\in A]}$ is an indicator function that takes the value 1 when $x\in A$ and 0 otherwise.
            \item If the event is moving from a pseudo compartment $C_{-1}$ to stochastic domain and $p(x, t)>1/h$ for all $x\in C_{-1}$, then $N_s(k,t+\tau)=N_s(k,t)+1$ and $p(x,t+\tau)=p(x,t)-\mathbb I_{[x\in C_{-1}]}/h$.
            \item Update the density for the pseudo compartment.
            \item Update the current time, $t = t_C$.
        \end{enumerate}
    \item If $\min\{t_C, t_P,t_{I}\}=t_{P}$ then the PDE domain is updated:
        \begin{enumerate}
            \item Apply backward Euler for diffusion terms and forward Euler for reaction terms.
            \item Update the density for the pseudo compartment.
            \item Update the current time, $t = t_{P}$ and set $t_{P} = t + \Delta t_P$.
        \end{enumerate}
    \item If $\min\{t_C, t_P,t_{I}\}=t_{I}$ then the interfaces are updated, similar to step \ref{interface}:
        \begin{enumerate}
            \item For each reaction, find all $k$ such that $\min_{s\in S_j}\{N_s(k,t)\}<\theta$, where $S_j$ contains all species involved in reaction $j$, then the $k$-th compartment is part of the stochastic domain $\Omega^j_{C}$, and otherwise part of the PDE domain $\Omega^j_{P}$.
            \item All compartments adjacent to $\Omega^j_{C}$ (no diagonal angles) are regarded as pseudo compartments.
            \item For the compartment $C_k$ that change from PDE domain to stochastic domain, let $P_s=\{\int_{C_k}p_s(x,t)dx\}$. Take a random number $r_s\in[0,1]$. 
            \begin{itemize}
            	\item If $r_s<P_s$ then $N_s(k,t_I)=$ ceil $(N_s(k,t))$ and $p_s(x,t_I)=p_s(x,t)- (1-P_s)/$ Area $(\Omega_{P}^j)$ for $x\in \Omega_{P}^j$;
            	\item otherwise, $N_s(k,t_I)=$ floor $(N_s(k,t))$ and $p_s(x,t_I)=p_s(x,t)+ P_s/$ Area $(\Omega_{P}^j)$ for $x\in \Omega_{P}^j$.
            \end{itemize}
            \item Update the current time, $t = t_{I}$ and set $t_{I} = t + \Delta t_I$. 
        \end{enumerate}
    \item If $t \leq T$, return to step \ref{return}.\\
          Else end.
\end{enumerate}
\end{breakablealgorithm}

\subsection*{Parameter estimation}
We consider the spatial domain as a two-dimensional square with length $L=2.552$mm, which is the same as the experimental data; for the PDE numerical scheme, we apply the central difference scheme to discretize the Laplace operation with $\Delta x =\Delta y = 0.058$mm; for the temporal numerical scheme, we use the backward Euler method for the Laplace operation and forward Euler method for the other terms with time step $\Delta t=0.01$h. In the SSA approximation, the domain is partitioned into square compartments with dimension $h \times h=\Delta x \times \Delta y$.

Diffusion coefficients of virus and DIP are set to be $2.38 \times 10^{- 6}$cm$^2$/h in~\cite{Clark2011} while the decay rate is $4.0 \times 10^{-5}$s$^{-1}$. As the diffusion rate varies according to the environment and plays a vital role in spatial distribution, we increased the former $d_V=d_D=2.38 \times 10^{- 3}$mm$^2$/h to match the experimental data and left the latter unchanged $\delta_V=\delta_D=0.144$h$^{-1}$.

In~\cite{Mapder2019}, the rate of virus production is expressed as the product of the number of viruses released per cell after packaging and the rate at which each cell produces viruses. Therefore $\alpha_1=758.045\times(68.503\times10^{\pm2}$d$^{-1}$$) =2163.682\times10^{\pm2}$h$^{-1}$, and $\alpha_3=38.259\times (21.782 \times 10^{\pm2}$d$^{-1}$$)=34.723\times10^{\pm2}$h$^{-1}$. The wide range of parameters allows us to choose a suitable value to match the experimental results. So we set $\alpha_1=6.491$h$^{-1}$ and $\alpha_3=69.446$h$^{-1}$. Since DIPs may exhibit a replication advantage over infectious viruses~\cite{Baltes2017}, we assumed $\alpha_2=\alpha_3/10$ in this work.

Same as~\cite{Mapder2019}, the intrinsic rate of uninfected cell proliferation $\alpha_C=15.217$d$^{-1}$$=0.634$h$^{-1}$. But the cellular carrying capacity of proliferation varies depending on the experimental environment. We let $K=3.505\times 10^5\times h^2$ cells/compartment to match the experimental data, where $h^2$ is the compartment area.

The rate of maturation of $C_V$ cells into $C_V^*$ cells is $9.863\times10^{\pm2}$d$^{-1}$ in~\cite{Mapder2019}. We slightly increase it to $\nu_1=\nu_2=0.205$h$^{-1}$ because mature infected cells are observed later in experiments. 
$\beta_1$ and $\beta_2$ are considered as the death rate of $C_V^*$ and that of $C_V^*$ respectively, which are $2.426\times10^{\pm2}$d$^{-1}$ in~\cite{Mapder2019}. We take $\beta_1=\beta_2=0.05$h$^{-1}$ in simulations.

Virus and DIP infection rate is $2.45 \times 10^{- 10}$d$^{-1}$$=1.02 \times 10^{- 11}$h$^{-1}$ in~\cite{Mapder2019}, which is relatively small. Different experiments and higher cell density may lead to a larger infection rate. Hence we set $\gamma_1=\gamma_2=4\times 10^{-4}$h$^{-1}$.

The infected cell death rate is $5.91\times10^{-2}$h$^{-1}$ in~\cite{Clark2011}, which is used as death rates for all cells in our simulations.

All parameters are listed in Table~\ref{table1}. It is worth noting that our set of parameters can guarantee that species in the system without DIPs will coexist in the following simulations. A detailed proof is provided in Appendix.

\begin{table}[!ht]
\begin{adjustwidth}{-2.25in}{0in} 
\centering
\caption{{\bf Parameters used in the simulations.}}
\begin{tabular}{|l|l|l|}
    \hline
    {\bf Parameter} & {\bf Definition} & {\bf Value}\\ \thickhline
    $d_V$  & Diffusion coefficient of virus  & $2.38\times 10^{-3}$mm$^2$/h \\\hline
    $d_D$  & Diffusion coefficient of DIP  & $2.38\times 10^{-3}$mm$^2$/h \\\hline
    $\alpha_1$ & Rate of virus production from virus-infected cell & $6.491$h$^{-1}$ \\\hline
    $\alpha_2$ & Rate of virus production from co-infected cell & $\alpha_3/10$ \\\hline
    $\alpha_3$ & Rate of DIP production from co-infected cell  & $69.446$h$^{-1}$ \\\hline
    $\alpha_C$ & Rate of uninfected cell proliferation  & $0.634$h$^{-1}$  \\\hline
    $K$        & Cellular carrying capacity of proliferation & $3.505\times 10^5\times h^2$cell/compartment\\\hline
    $\nu_1$    & Rate of maturation of $C_V$ cells into $C_V^*$ cells & $0.205$h$^{-1}$\\\hline
    $\nu_2$    & Rate of maturation of $C_{VD}$ cells into $C_{VD}^*$ cells & $0.205$h$^{-1}$  \\\hline
    $\beta_1$  & Death rate of $C_V^*$  &0.05 \\\hline
    $\beta_2$  & Death rate of $C_{VD}^*$  &0.05 \\\hline
    $\gamma_1$ & Virus infection rate & $4\times 10^{-4}$h$^{-1}$ \\\hline
    $\gamma_2$ & DIP infection rate & $4\times 10^{-4}$h$^{-1}$ \\\hline
    $\delta_V$ & Virus decay rate & $0.144$h$^{-1}$ \\\hline
    $\delta_D$ & DIP decay rate & $0.144$h$^{-1}$ \\\hline
    $\delta_{i},i=C,C_V,C_D,C_{VD}$ & Death rate of cells & $0.059$h$^{-1}$\\\hline
\end{tabular}
\label{table1}
\end{adjustwidth}
\end{table}

\subsection*{Interpretation of experimental data}
The experiment data published in~\cite{Baltes2017} is composed of time series of images obtained via microscopy from the co-propagation of infectious and defective viruses in a population of biological cells. These co-infection experiments were initiated with the same  virus inputs (MOI 30) but different DIP inputs (namely MOI 0,1,10 and 84). and microscopy images were taken at 7 hours, 13 hours, 19 hours and 25 hours post infection. The DIP expresses a green fluorescent protein (GFP) and the wild-type virus expresses a red fluorescent protein (RFP)  
There are three to five time series for each of the RFP intensity and the GFP intensity. Each image has size of $(2200,2200)$ with diameter of $1.16\mu$m pixel. The scale bar is $0.5$mm.

Fluorescent protein labeling is usually used for qualitative purposes, and there is no linear relationship between brightness and intensity. Therefore the experimental images only provide a reference for virus expression in simulations.

Since in the following simulations we employed the compartment-based method while experiments provide scatter diagrams, we have done some preprocessing to compare them with the computer simulation results. Fig~\ref{lab}A is a representative experiment figure. We extracted the red single channel (the virus is expressed) and filtered noise, as shown in Fig~\ref{lab}B. We then did morphological transformations (dilation followed by erosion) to close small holes inside the objects. Therefore Fig~\ref{lab}C maintains the critical features of virus expression in experiments and is more approximate to compartment-based.
\begin{figure}[!ht]
\includegraphics[width=0.85\textwidth]{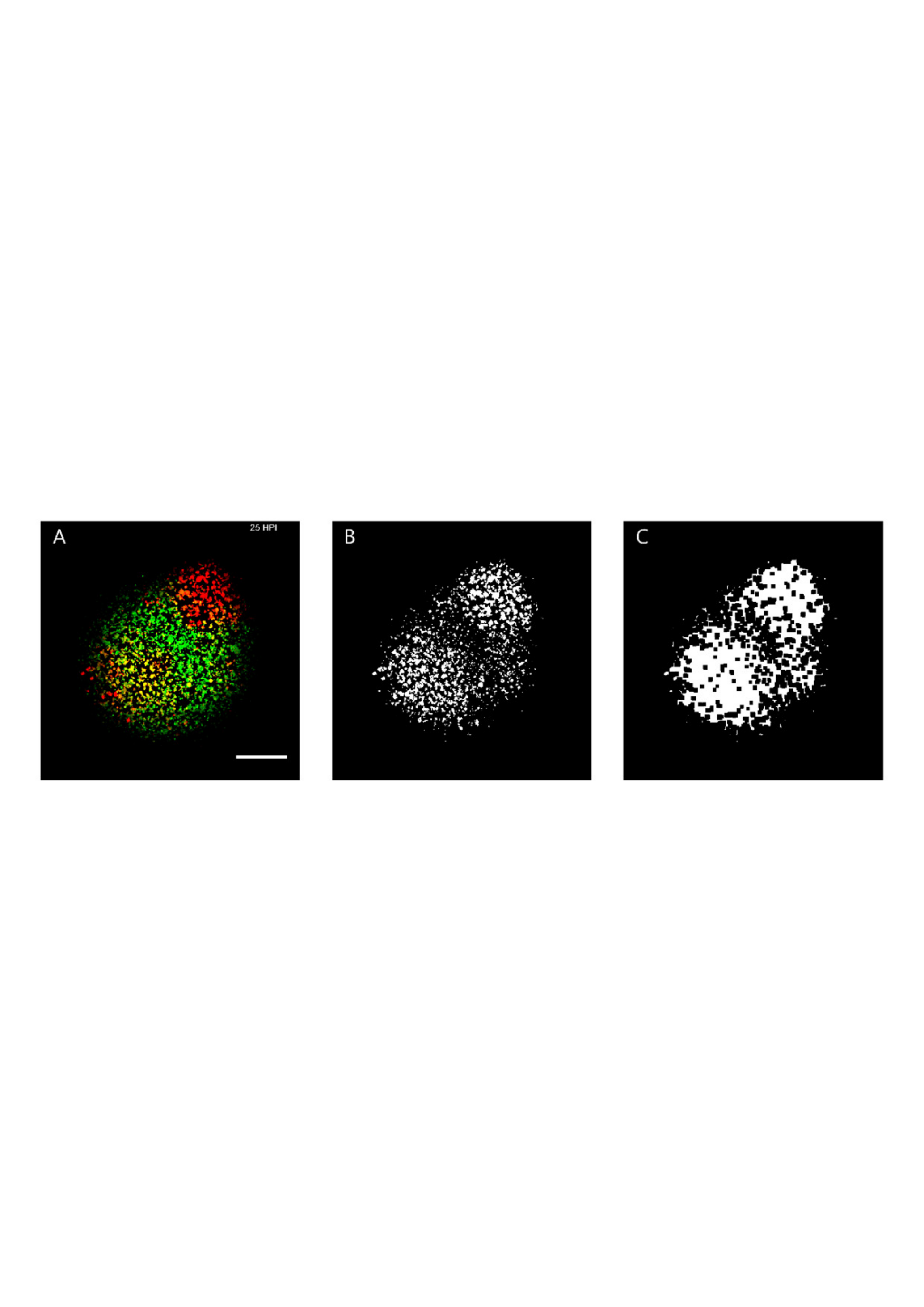}
\caption{{\bf A representative example of interpreting the preprocessing of experimental figures.}}
\label{lab}
\end{figure}


\section*{Results}
\subsection*{Dynamics and pattern formation of virus expression}
We first study the PDE model in Eqs~(\ref{equation1})-Eqs~(\ref{equation1_1}). Fig~\ref{PDE_t}A shows the time series of $C^*_V$ and $C^*_{VD}$ spatial distribution in a 2D domain at time 9-25h with no DIP, as well as images of experimental data at the same time in~\cite{Baltes2017}. In both simulations and experiments, viruses are uniformly radially distributed. In Fig~\ref{PDE_t}B, the initial conditions include $C_{VD}(0)=100$ for the PDE model, then viruses are distributed in a ring while the DIPs are radially distributed in the center. Compared with the experimental results under similar conditions, patchiness is not observed in PDE simulations. 

\begin{figure}[!ht]
\includegraphics[width=0.85\textwidth]{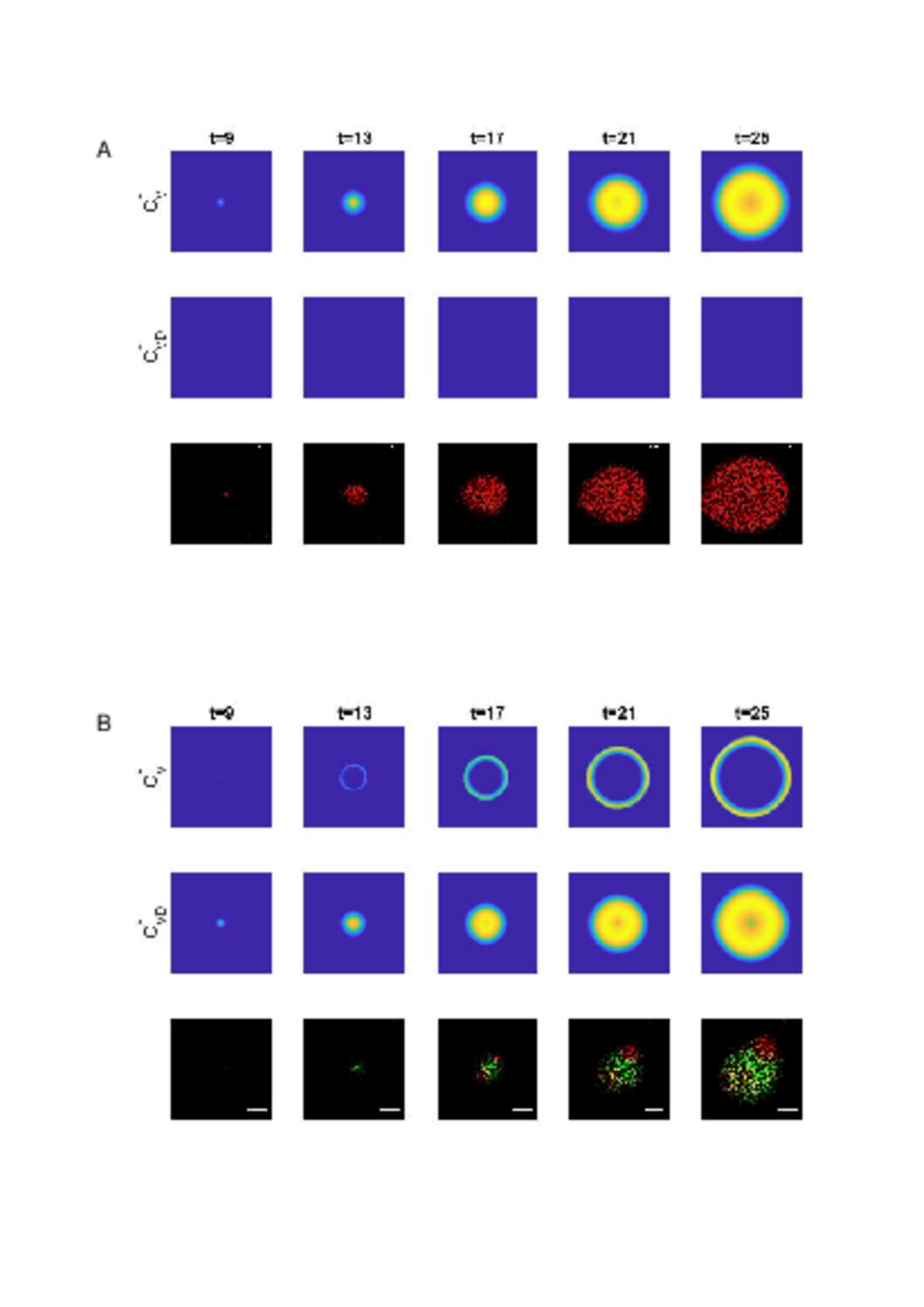}
    \caption{{\bf Dynamics of virus and DIP in cells in PDE simulations and experiments.}}
    A: Time series plot of virus in cells ($C^*_V$) and DIP in cells ($C^*_{VD}$) growth in PDE simulations and representative experiment with initial DIP equal to 0. B: Time series plot of virus in cells ($C^*_V$) and DIP in cells ($C^*_{VD}$) growth in PDE simulations and representative experiment with initial DIP equal to 84.
    \label{PDE_t}
\end{figure}

In stochastic simulations, the same types of particles and cells in different compartments are treated as different species, for example, for $V$, denoted by $\{V_{1,1},\cdots, V_{1, N_c}, V_{2,1},\cdots, V_{2, N_c},\cdots, V_{N_c, N_c}\}$. The initial condition is $V_{i,j}(0) = D_{i,j}(0) = C^*_{Vi,j}(0) = C^*_{VDi,j}(0) = C_{Di,j}(0)= 0$ and $C_{i,j}(0) = 1000$ for all $(i, j)$,
$C_{Vi,j}(0) = C_{VD i,j}(0)=0$ for all $(i, j)$ except the midpoint $C_{V22,22}(0) = 100$, and $C_{VD 22,22}(0)$ varies from $0$ to $400$.

Two scenarios are considered in simulations and compared with experimental results:
\begin{enumerate}
    \item[] Scenario 1:$\;$ infected cells produce virus and DIPs through cell bursting;
    \item[] Scenario 2:$\;$ infected cells keep producing viruses and DIPs continuously.
\end{enumerate}
Fig~\ref{case1} shows the evolution of the virus without initial DIP. The first two rows are time series of amounts of matured infected cells $C^*_V$ and $C^*_{VD}$, which is proportional to viral expression and DIP—virus expression in Scenario 1 from time $t=9$h to $25$h. The third and fourth rows are those in Scenario 2. The experimental observation has an inherent threshold, and images have been denoised; therefore, we also introduced a cut-off for simulation data. Namely the amount of cells is set to be zero if it is less than the cut-off value $50$, which is also applied to all the following simulations. The last row is the evolution of a representative experiment without initial DIP. When there is no DIP in the system initially, there is no DIP growth, and the virus growth is radially symmetric and flat in both scenarios and experiments. 

\begin{figure}[!ht]
\includegraphics[width=0.85\textwidth]{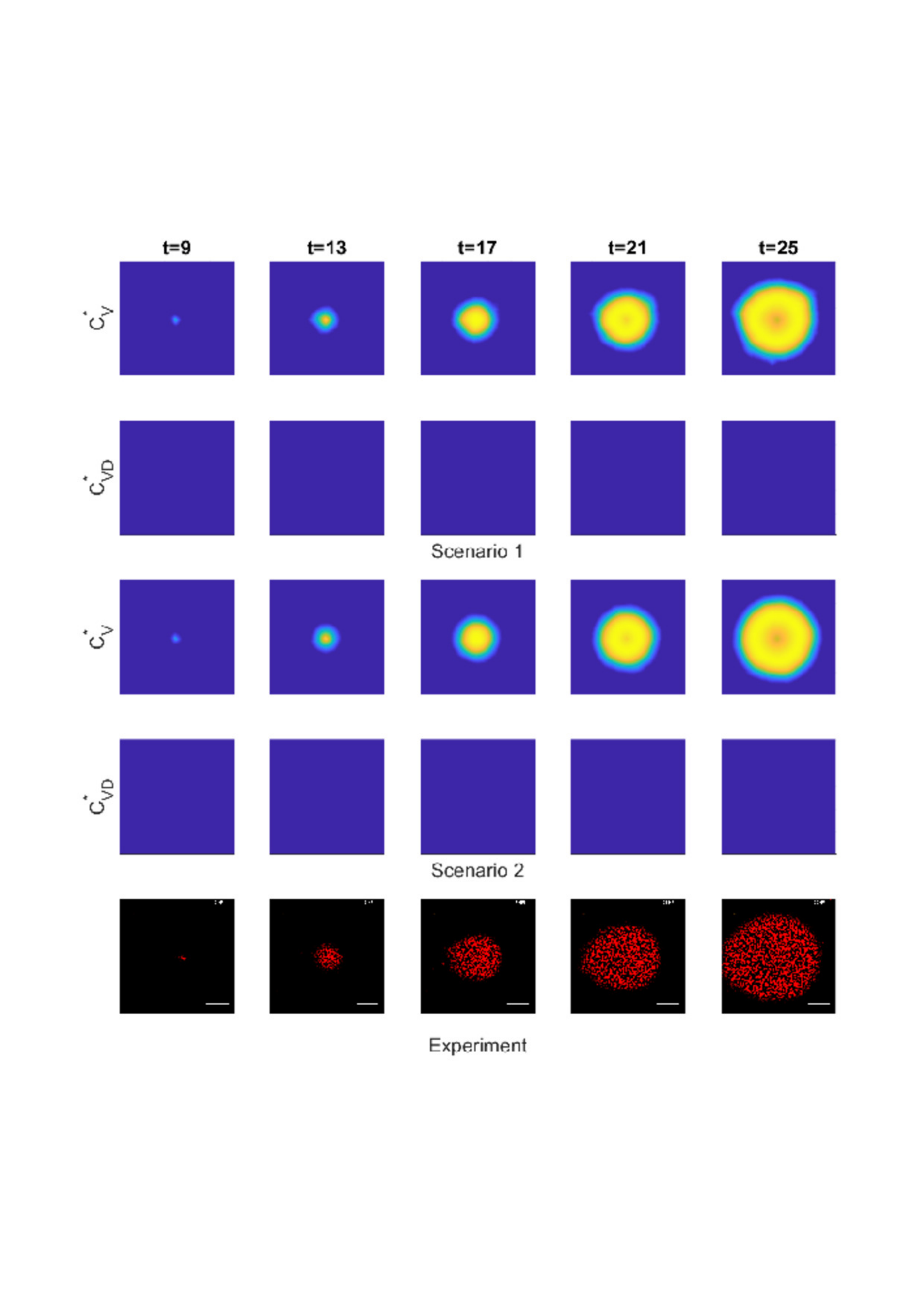}
\caption{{\bf Dynamics of virus and DIP in cells in 2 scenarios simulations and representative experiment with initial DIP equal to 0.}}
Row 1 and 2 are time series plots of virus in cells ($C^*_V$) and DIP in cells ($C^*_{VD}$) growth in Scenario 1 (infected cells produce virus and DIPs through cell bursting); 
Row 3 and 4 are time series plots of those in Scenario 2 (infected cells keep producing viruses and DIPs); Row 5 is the representative experimental results.
\label{case1}
\end{figure}

In experiments, the radial symmetry disappears as the initial amount of DIP increases. In fact, patchy formation is sensitive to the dose of DIP. It can be observed even with a small initial dose of DIP (Fig~\ref{case2}). When the initial DIP is raised from 10 to 84 in experiments, the majority of the virus at the end is concentrated (see Fig~\ref{case3}, Fig~\ref{case4}). Simulations show similar results, but Scenario 1 shows a much higher probability of forming a pattern than S2 and matches the experimental data better. In scenario S1, infected cells produce viruses and DIPs through cell bursting, and then viruses and DIPs diffuse, which leads to a more accidental position; while in Scenario 2, infected cells keep producing viruses and DIPs, meaning the location of those cells will continuously produce virus and DIPs. Hence the spatial distribution is more centralized rather than patchy.

\begin{figure}[!ht]
\includegraphics[width=0.85\textwidth]{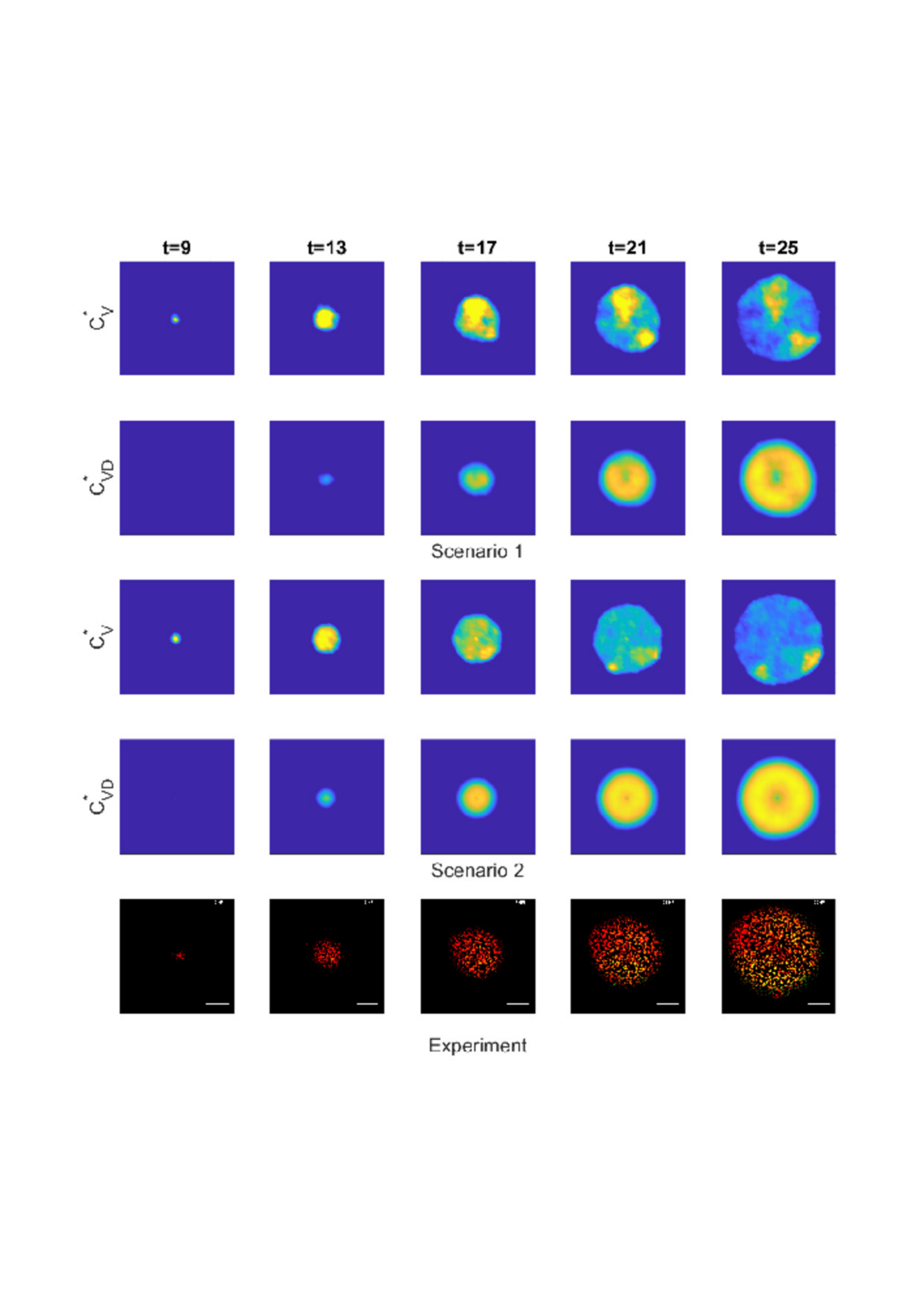}
\caption{{\bf Dynamics of virus and DIP in cells in 2 scenarios simulations with $C_{VD22,22}(0)=4$ and representative experiment with initial DIP equal to 1.}}
Row 1, 2 are time series plots of virus in cells ($C^*_V$) and DIP in cells ($C^*_{VD}$) growth in Scenario 1 (infected cells produce virus and DIPs through cell bursting); 
Row 3, 4 are time series plots of those in Scenario 2 (infected cells keep producing viruses and DIPs); Row 5 is the representative experimental results.
\label{case2}
\end{figure}

\begin{figure}[!ht]
\includegraphics[width=0.85\textwidth]{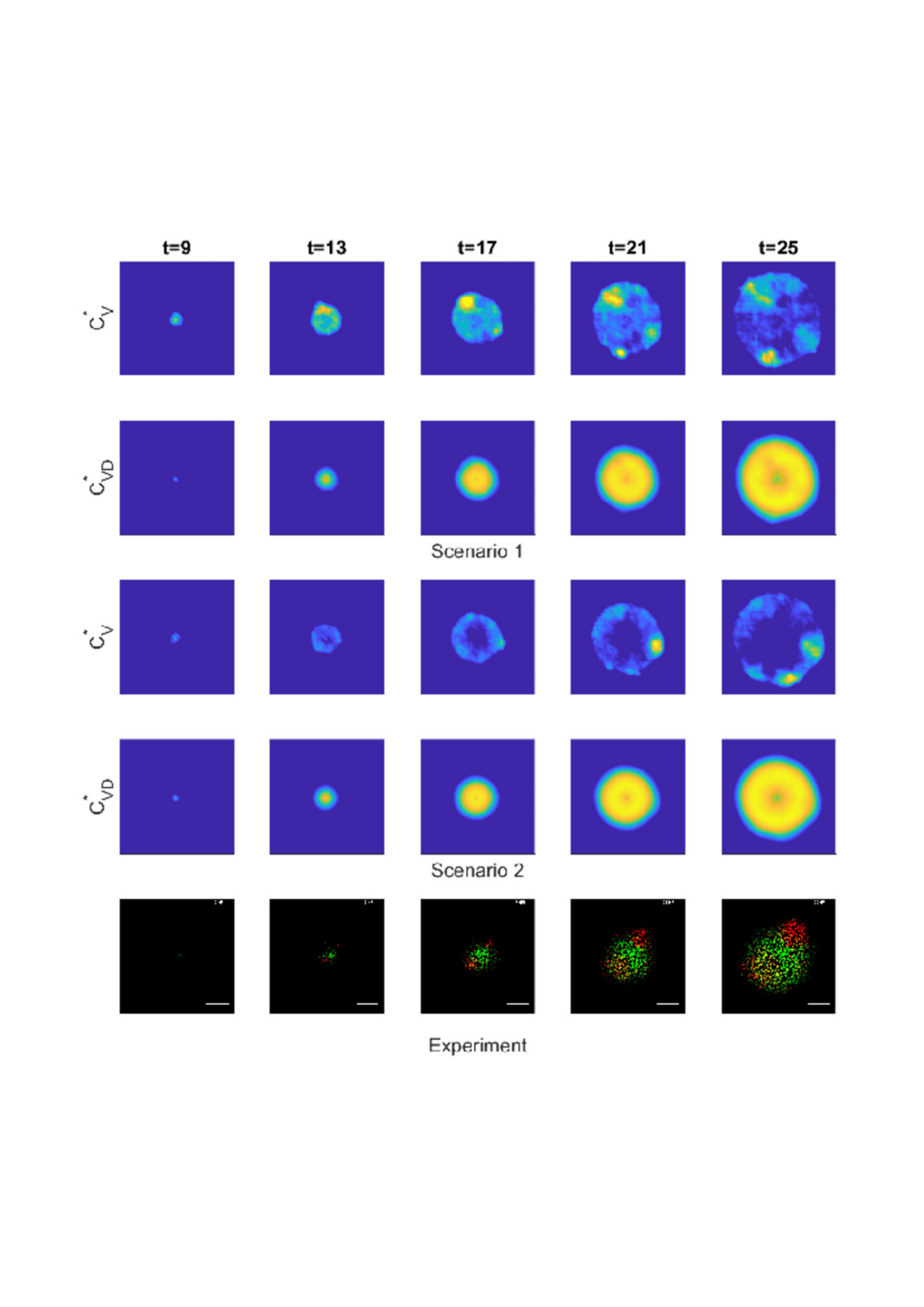}
\caption{{\bf Dynamics of virus and DIP in cells in 2 scenarios simulations with $C_{VD22,22}(0)=40$ and representative experiment with initial DIP equal to 10.}}
Row 1, 2 are time series plots of virus in cells ($C^*_V$) and DIP in cells ($C^*_{VD}$) growth in Scenario 1 (infected cells produce virus and DIPs through cell bursting); 
Row 3, 4 are time series plots of those in Scenario 2 (infected cells keep producing viruses and DIPs); Row 5 is the representative experimental results.
\label{case3}
\end{figure}

\subsection*{Spread rate of virus}
To quantify the spread characteristics of viral expression under stochastic effects, we define the virus radius as:
\begin{eqnarray}
R(t)=\sqrt{\frac{Area(C_V(x,y,t))}{\pi}},
\end{eqnarray}
where $C_V(x,y,t)$ represents the amount of cells infected by virus at grid point $(x,y)$ at time $t$.
Since a certain amount of virus expression is required to be observed in the experiment and noise is filtered, we also set a cut-off $\theta=50$ for computing area, i.e.
\begin{eqnarray}
Area(C_V(x,y,t))=\sum_{(x,y)} \mathbb I_{C_V(x,y,t)>\theta} \Delta x\Delta y.
\end{eqnarray}
For experimental data, a detailed illustration is in Fig~\ref{lab}. Fig~\ref{radius} shows the radius of virus versus time $9\leq t\leq 25$ (h) with different initial DIP inputs in 2 scenarios simulations and experiments. We can see the radius keeps increasing and viruses keep spreading in all cases. Whereas as initial DIPs increase, in both experiments and simulations, the growth rate of radius goes down, which is due to the inhibitory effect of DIPs on viruses. On the other hand, the Scenario 1 can better match the experimental results, both in terms of the dynamics and the level of fluctuations.
\begin{figure}[!ht]
\includegraphics[width=0.85\textwidth]{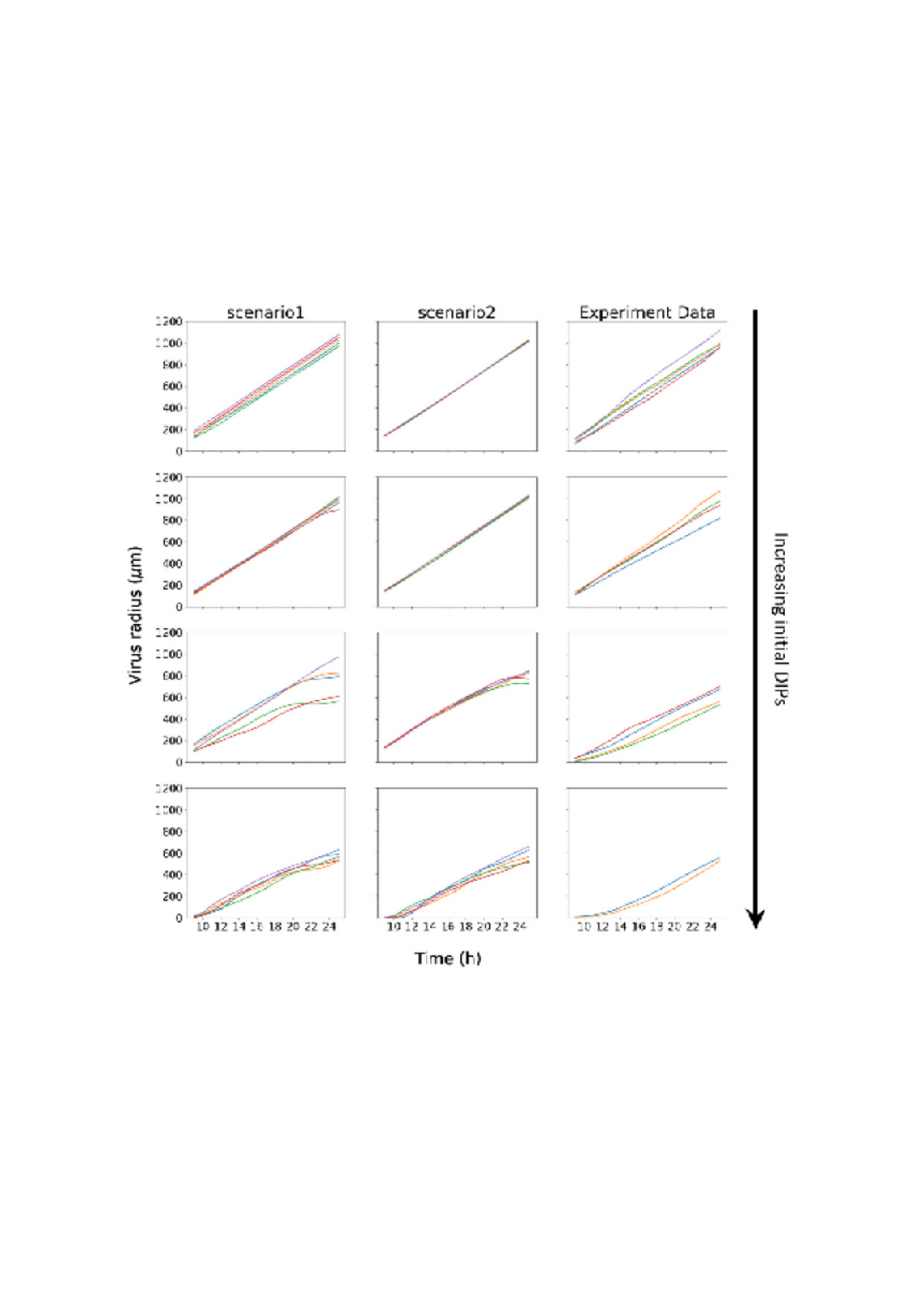}
\caption{{\bf Radius of virus against time with different initial DIP inputs in 2 scenarios simulations and experiments.}}
\label{radius}
\end{figure}

We note that after a certain time, the plague radius grows linearly with respect to time for each
fixed initial dose of DIPs, and studied the relationship between the virus radius growth rate and initial DIP dose. To get rid of the difference in units of initial conditions in simulations and experiments, we consider a dimensionless ratio $\rho=\frac{C_{VD}(0)}{C_{V}(0)}$. Since initial viruses remain the same, $\rho$ is proportional to initial DIPs.
Fig~\ref{vs_initial} shows the relationship between the virus radius growth rate and initial DIP dose intuitively. We used a logarithmic x-axis, so it is shifted by 0.01 to avoid troubles when $\rho=0$. The growth rate was computed using the data points after $t = 13$h to ensure in all cases we have close to linear growth in radius vs. time (slope of lines in Fig~\ref{radius}).
We run 50 group simulations for each initial condition for computing the average.
When there is no DIP in the system, the virus radius growth rates are the same in both scenarios; as initial DIPs increase, the growth rate drops dramatically, which means DIPs slow down the growth of virus particles.
\begin{figure}[!ht]
\includegraphics[width=0.85\textwidth]{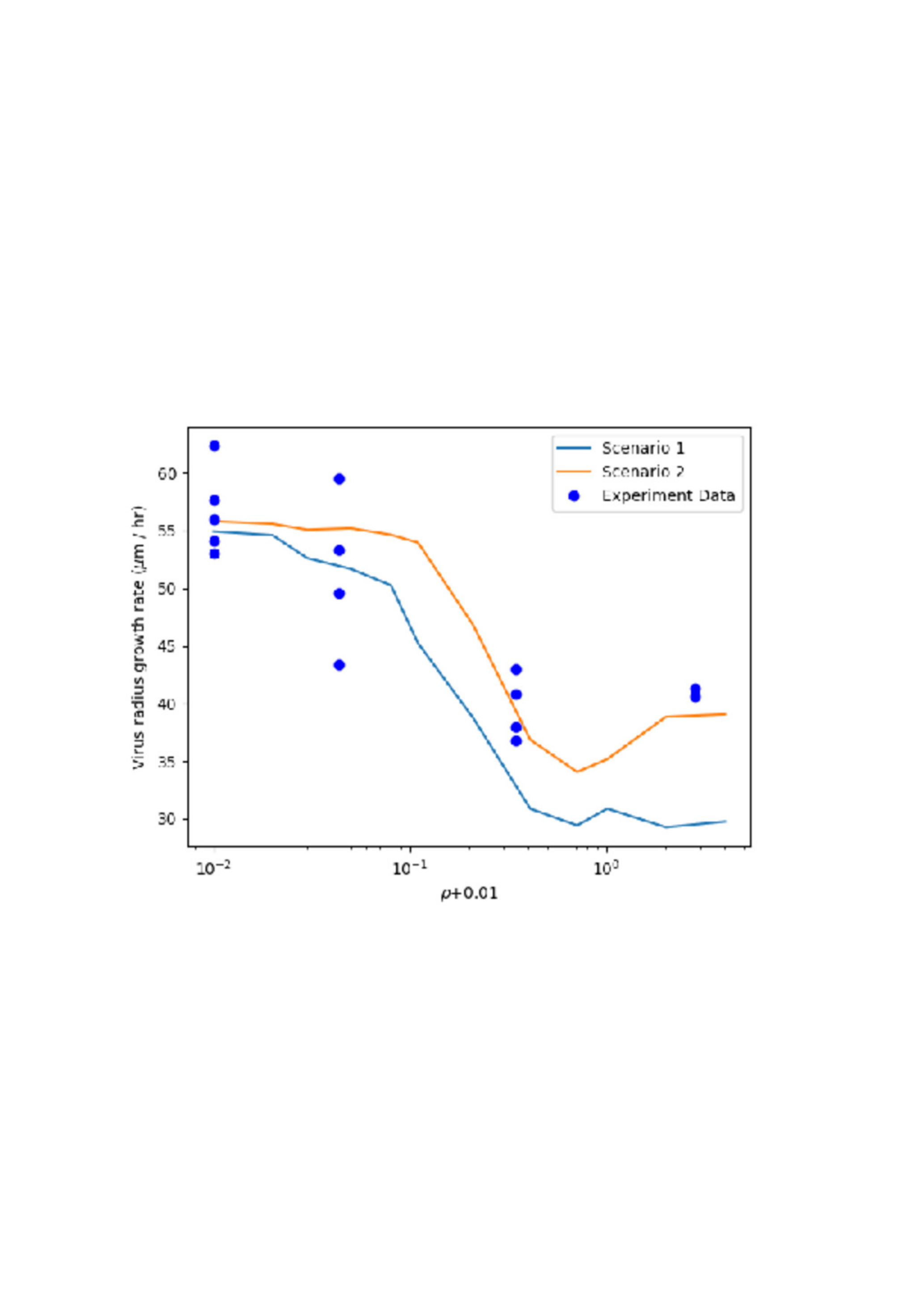}
\caption{{\bf The growth rate of virus radius against initial DIP inputs in 2 scenarios simulations and experiments.}}
\label{vs_initial}
\end{figure}

\subsection*{Patchiness via $q$-statistic} 
Patchy spatial patterns are typically observed in the image data when the initial dose of DIP is large enough. Therefore, we quantify the patchiness of image data by the $q$-statistic, which is a standard spatial statistic used to measure spatial stratified heterogeneity~\cite{wang2016measure}. The definition of this statistic depends on our choice of strata, which is a decomposition of the image data. In our case, the entire image is divided into 30 sectors with an equal ratio of the angle to form 30 strata $S = \{L_1, L_2,\cdots, L_{30}\}$ and the union of $L_i$ is the whole plaque $P$. A visual illustration is as shown in Fig~\ref{stratas}. In our case, since experiments employed qualitative rather than quantitative methods, that is, we can see viral expression at all fluorescent points but the brightness of these points is not proportional to the intensity. So we convert all figures binary and $M(i,j,t)$ denotes the brightness at the $(i,j)$-th pixel at time $t$ for the image, which range is $\{0, 255\}$. For simulation results, we set a threshold for the binary transformation to approximate the threshold inherent in the experimental methods and offset the loss when denoising the experimental images. Specifically, when $C_{V}^*<50$, $M=0$ and the point is black in the image; when $C_{V}^*\geq50$, $M=255$ and the point is red (an example in Fig~\ref{simu_binary}).
Then, the $q$-statistics is defined to be 
\begin{eqnarray}
\label{qstat}
\begin{aligned}
q_t &= 1 - \frac{\sum_{L \in S}\sum_{(i,j) \in L}\left(M(i,j,t) - \overline{M_L^t}\right)^2}{\sum_{(i,j) \in P}\left(M(i,j,t) - \overline{M_P^t}\right)^2}\\
&= 1 - \frac{1}{N \sigma^2} \sum_{L \in S} N_L \sigma_L^2,
\end{aligned}
\end{eqnarray}
where $\overline{M_A^t} = \frac{\sum_{(i,j) \in A} M(i,j,t)}{|A|}$ and $|A|$ is the cardinality of set $A$. $N$, $N_h$, $\sigma$, $\sigma_h$ denote the number of pixels in the entire image, the number of pixels in each stratum, the standard deviation of the entire image and the standard deviation of each stratum, respectively. This statistic is invariant under spatial scale and remains the same if the intensity of the image is multiplied by a factor. 

A more intuitive formula for the $q$-statistic is as follows~\cite{wang2016measure}, here we omit the time dependence, meaning denote $M(i,j,t)$ by $M_{i,j}$ and $q_t$ by $q$; The denominator of Eqs~(\ref{qstat}) can be written as 
\begin{eqnarray}
    \sum_{L \in S}\sum_{(i,j) \in L} \left( M_{i,j} - \overline{M_L}\right)^2 + \sum_{L \in S} |L|\left(\overline{M_L} - \overline{M_P}\right).
\end{eqnarray}
Call those two terms the sum of squares within strata (SSW) and the sum of squares between strata (SSB) and note that the numerator of Eqs~(\ref{qstat}) is exactly SSW, so 
\begin{eqnarray}
    q = 1 - \frac{SSW}{SSW+SSB}.
\end{eqnarray}

So if $q \approx 1$, that means the sum of squares within strata is relatively more minor, indicating in each stratum, the virus is concentrated and the sum of squares between strata is somewhat more significant, meaning the differences between strata are large, so the image would appear to be more patchy. If $q \approx 0$, the variance in each stratum is large, and the differences between strata are minor, so the picture would appear to be not so patchy.

We study the behavior of $q$-statistic of cells infected by the virus at time $t=25$h, that is $C_{V}^*(25)$ in simulations, when the initial dose of DIPs varies. To get rid of the difference in units, we consider a dimensionless ratio $\rho=\frac{C_{VD}(0)}{C_{V}(0)}$. Since we always keep the initial viruses constant, $\rho$ is proportional to the initial dose of DIPs.

In Fig~\ref{q_vs_ini}, the x-axis is a logarithmic scale so we shift it by 0.01 to the right to avoid trouble when $\rho=0$ (initial DIP is zero). On the left, the blue line denotes the $q$-statistic of Scenario 1 (infected cells produce viruses and DIPs through cell bursting) while the green line denotes that of Scenario 2 (infected cells keep producing viruses and DIPs). The $95\%$ confidence intervals are also presented respectively. On the right, we marked the $q$-statistic for each experimental image at time $t=25$h and plot the average for four groups of experiments. Both scenarios show the same trend as experiments. When there is no DIP in the beginning, the $q$-statistic is minor, meaning the spatial distribution is uniform. The $q$-statistic increases as the initial dose of DIPs increases. Taking into account the conclusions of the previous section, DIPs slow down the growth of virus particles and make them more patchy. But when the initial dose of DIP is large enough, we observe a drop of $q$-statistic. It may be caused by the domination of DIPs. The $q$-statistic is sensitive to the changes in DIPs. On the other hand, Scenario 2 shows a closer magnitude of $q$-statistic to experimental data while that of Scenario 1 is relatively higher. When infected cells produce viruses and DIPs through cell bursting, their positions are more stochastic, and hence there is a larger probability of patchy formation, which also explains the wider confidence interval of Scenario 1.
\begin{figure}[!ht]
 \includegraphics[width=0.85\textwidth]{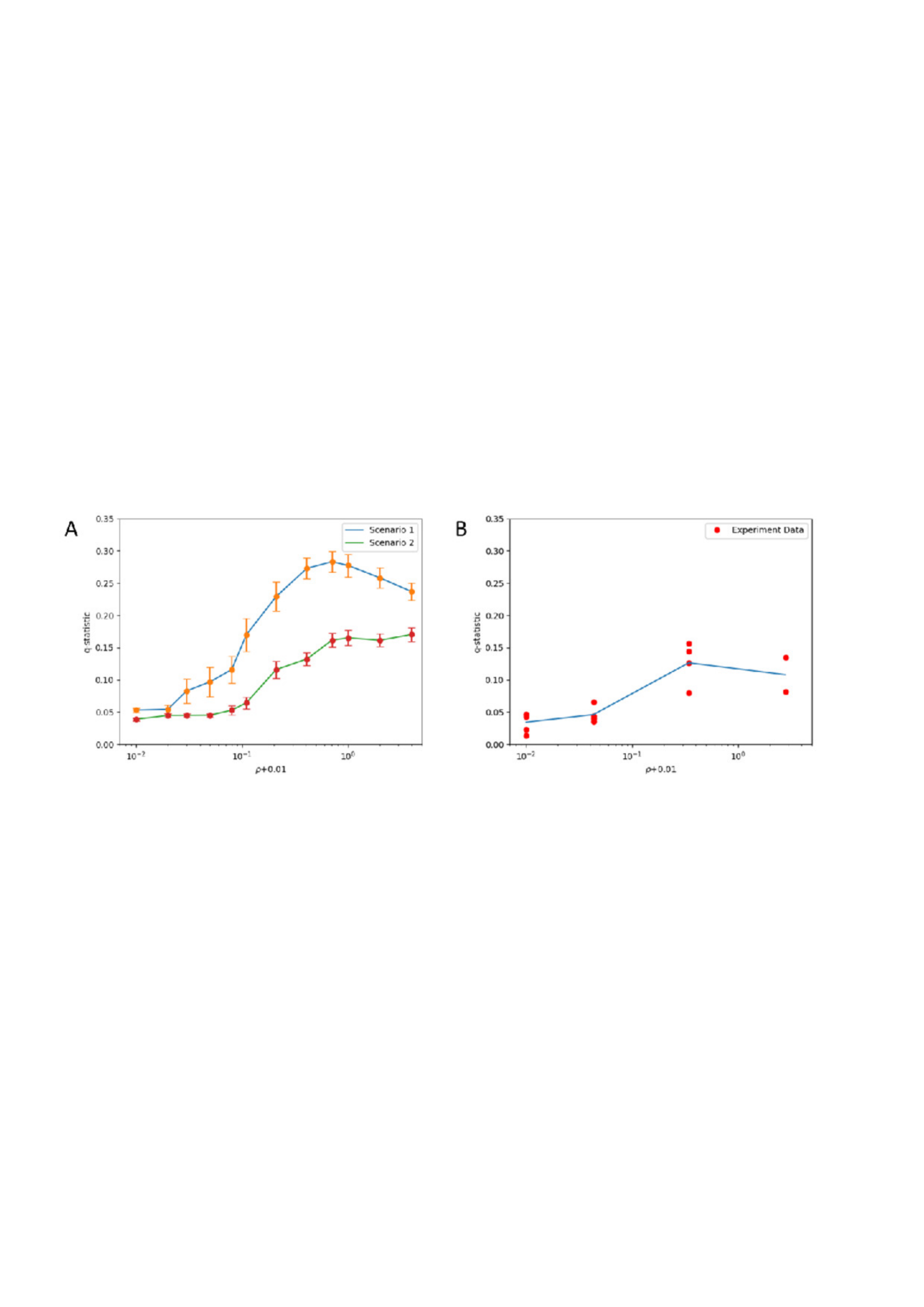}
	\caption{{\bf The average $q$-statistics of virus against initial DIP inputs at time t = 25h in 2 scenarios simulations and experiments.}}
	\label{q_vs_ini}
\end{figure}

\section*{Discussion}
DIPs can co-infect a cell with viable viruses and interfere with virus production~\cite{Baltes2017, Frensing2015}. However, the mechanism by which DIPs affect the spatial distribution of virus expression is still unclear.

In this work, we constructed a PDE model to describe the interaction between viruses and DIPs in a two-dimensional domain. Moreover, to study the stochastic effect on spatial dynamics of the virus spreading and patchy formation, we developed a stochastic reaction-diffusion system to describe the system in two different scenarios of virus production. In Scenario 1, infected cells produce viruses and DIPs through cell bursting. Therefore the position of virus production is accidental, which leads to a higher probability of patchy formation. In Scenario 2, infected cells keep producing viruses and DIPs. The virus is produced continuously at the cell position, making the spatial distribution concentrated in one point.
The patchy pattern observed in the experiments can be regenerated in our stochastic simulation results. Our model provides a good framework for studying reaction-diffusion systems under stochastic effects. 

We also built a hybrid algorithm for stochastic simulation. Classical stochastic methodologies are computationally intensive in two-dimensional cases. Our algorithm is based on the pseudo-compartment method~\cite{yates2015pseudo} and introduces adaptive multiple interfaces to adjust complex systems. 
It combines two scales of simulation methods for modeling the reaction processes and can capture the advantages of the methods with different scales. It improves computational efficiency and maintains critical stochastic features.
Our method can provide a numerical framework for studying the spatial stochastic effect of other biological systems and is compatible with different scale stochastic study methods like stochastic differential equations. 

We quantitatively studied the spread rate of the virus and showed the relationship between the spread radius growth rate and the initial dose of DIP. To measure the patchiness, we computed the $q$-statistic. Our simulations can simultaneously capture two spatial spread features (patchiness and spread rate) in wet-lab experimental data, which was not achieved in previous works. It supports that the DIPs slow down the growth of virus particles and make them more patchy. These quantitative methods and statistics are useful tools to understand and explain the diverse spatial-temporal features in complex biological systems.

\appendix
\renewcommand\thefigure{A.\arabic{figure}}    
\renewcommand\theequation{A.\arabic{equation}}
\section*{Appendix}
\setcounter{figure}{0}    
\setcounter{equation}{0}
\subsection*{Virus infection without DIPs}\label{A1}

\setcounter{equation}{0}
Setting $D=C_D=C_{VD}=C_{VD}^*=0$ in Eqs~(1) and Eqs~(2) gives the system
\begin{equation}
\begin{aligned}
\frac{\partial V}{\partial t}&=d_V\nabla^2V+\alpha_1C_V^*-\delta_V V,\\
\frac{\partial C}{\partial t}&=\alpha_C C(1-C_T/K)-\gamma_1CV-\delta_C C,\\
\frac{\partial C_V}{\partial t}&=\gamma_1CV-\nu_1C_V-\delta_{CV}C_V,\\
\frac{\partial C_V^*}{\partial t}&=\nu_1 C_V-\beta_1 C_V^*.
\end{aligned}
\label{eqn_withoutDIP}
\end{equation}

To study the dynamics of Eqs~(\ref{eqn_withoutDIP}), we obtain the following result for the homogeneous steady states of the system. Roughly, the number of steady states decreases from 3 to 1 as $\delta_C/\alpha_C$ increases pass two critical values.
\begin{lem}
 Let $Z = \frac{\delta_V\beta_1(\delta_{CV}+\nu_1)}{\gamma_1\alpha_1\nu_1}$. For studying the non-negative homogeneous steady states of Eqs~(\ref{eqn_withoutDIP}), there are three cases:
\begin{itemize}
\item[(1)] If $\alpha_C \le \delta_C$, there is only one steady state $E_0=(0,0,0,0)$.
\item[(2)] If $ \alpha_C\left(1-\frac{Z}{K}\right)\le \delta_C <\alpha_C$, there are two steady states, $E_0$ and $E_1=(0,\frac{K(\alpha_C - \delta_C)}{\alpha_C},0,0)$. 
\item[(3)]  If $\delta_C<\alpha_C\left(1-\frac{Z}{K}\right)<\alpha_C$, there are three steady states, $E_0$, $E_1$ and $E_2=(\overline{V}, Z, \overline{C}_V, \overline{C}_V^*)$ where
\begin{align*}
\overline{C}_V &= \frac{Z\left(\alpha_C\left(1-\frac{Z}{K}\right)-\delta_C\right)}{\nu_1 +\delta_{CV}+(\beta_1+\nu_1)\alpha_C Z/(\beta_1 K)},\\
\overline{C}_V^* &= \frac{\nu_1}{\beta_1}\overline{C}_V,\\
\overline{V}&= \frac{\alpha_1\nu_1}{\delta_V\beta_1}\overline{C}_V.
\end{align*}
\end{itemize}
\label{lemma 2.1}
\end{lem}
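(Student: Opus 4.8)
The plan is to look for spatially homogeneous equilibria, so I would first drop the diffusion term (the Laplacian of a constant vanishes) and set all time derivatives to zero, reducing Eqs~(\ref{eqn_withoutDIP}) to an algebraic system in $(V, C, C_V, C_V^*)$. Three of the four equations are effectively linear and can be solved sequentially: the last equation gives $C_V^* = (\nu_1/\beta_1)C_V$, the first then gives $V = (\alpha_1/\delta_V)C_V^* = (\alpha_1\nu_1/(\delta_V\beta_1))C_V$, and substituting into the third equation $\gamma_1 C V = (\nu_1 + \delta_{CV})C_V$ forces a dichotomy: either $C_V = 0$, or $C_V \neq 0$ together with $\gamma_1 C \cdot (\alpha_1\nu_1/(\delta_V\beta_1)) = \nu_1 + \delta_{CV}$, i.e.\ $C = Z$. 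This dichotomy is the structural backbone of the argument and cleanly separates the virus-free equilibria from the coexistence equilibrium.

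In the branch $C_V = 0$ I would have $V = C_V^* = 0$ and $C_T = C$, so the second equation collapses to $C[\alpha_C(1 - C/K) - \delta_C] = 0$. This yields $C = 0$ (the trivial state $E_0$, which always exists) or $C = K(\alpha_C - \delta_C)/\alpha_C$ (the state $E_1$), the latter being non-negative and distinct from $E_0$ precisely when $\alpha_C > \delta_C$. In the branch $C = Z$ I would substitute $C_T = Z + C_V(\beta_1+\nu_1)/\beta_1$ and $\gamma_1 V = (\nu_1+\delta_{CV})C_V/Z$ into the (now nonzero-$C$) second equation, divide through by $Z$, and solve the resulting linear equation for $C_V$. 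The only mildly delicate computation is checking that the expression obtained matches the stated $\overline{C}_V$; this follows after multiplying numerator and denominator by $Z$ and invoking the identity $Z\gamma_1(\alpha_1\nu_1/(\delta_V\beta_1)) = \nu_1 + \delta_{CV}$ that defines $Z$. The remaining components $\overline{C}_V^*$ and $\overline{V}$ are then read off from the relations already established.

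Finally I would translate the existence conditions into the three stated regimes. Since all rate constants are positive, the denominator of $\overline{C}_V$ is strictly positive and $C = Z > 0$ automatically, so $E_2$ has all positive components exactly when its numerator is positive, i.e.\ $\delta_C < \alpha_C(1 - Z/K)$; this is case (3). When $\alpha_C(1-Z/K) \le \delta_C < \alpha_C$, the coexistence numerator is nonpositive so $E_2$ is inadmissible while $E_1$ remains, giving case (2); and when $\delta_C \ge \alpha_C$ even $E_1$ degenerates into $E_0$, leaving only $E_0$, which is case (1). The main point to be careful about is the bookkeeping at the boundaries of these inequalities, so that the three cases are mutually exclusive and exhaustive, together with confirming that $\alpha_C(1-Z/K) < \alpha_C$ always holds (it does, since $Z/K > 0$), which makes the nesting of thresholds in cases (2) and (3) consistent. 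I do not anticipate a genuine obstacle here: the result is essentially a careful equilibrium enumeration, and the only real work is the algebraic reconciliation of the $\overline{C}_V$ formula.
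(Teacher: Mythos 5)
Your proposal is correct and takes essentially the same route as the paper's proof: solve the two linear equations to get $C_V^* = (\nu_1/\beta_1)C_V$ and $V = (\alpha_1\nu_1/(\delta_V\beta_1))C_V$, extract the dichotomy $C_V = 0$ or $C = Z$ from the third equation, and then resolve each branch into $E_0$, $E_1$, $E_2$ under the stated parameter regimes. The only cosmetic difference is in the $C = Z$ branch, where the paper cancels the infection terms by adding the second and third equations while you substitute $\gamma_1 V = (\nu_1+\delta_{CV})C_V/Z$ directly into the second equation; both yield the same linear equation for $\overline{C}_V$.
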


\begin{proof}
First, we consider the homogeneous steady state equations of Eqs~(\ref{eqn_withoutDIP}),
\begin{equation}
\begin{aligned}
0&=\alpha_1x_4-\delta_Vx_1,\\
0&=\alpha_C x_2(1-(x_2+x_3+x_4)/K)-\gamma_1x_2x_1-\delta_c x_2,\\
0&=\gamma_1x_2 x_1-\nu_1x_3-\delta_{CV}x_3,\\
0&=\nu_1 x_3-\beta_1 x_4.
\end{aligned}
\label{eqn_withoutDIP_ODE_ss}
\end{equation}
By the first and the last equations above, we obtain $x_1 =\alpha_1\nu_1x_3/(\delta_V\beta_1)$ and $x_4 = \nu_1 x_3/\beta_1$. Substitute $x_1 =\alpha_1\nu_1x_3/(\delta_V \beta_1)$  to the third 
equation, we have
$$0 =\gamma_1\alpha_1\nu_1x_2x_3/(\delta_V\beta_1)-\nu_1x_3-\delta_{CV}x_3,$$
$$0 =\gamma_1\alpha_1\nu_1(x_2-Z)x_3/(\delta_V\beta_1).$$
$$x_3=0 \text{ or }x_2=Z.$$

If $x_3 = 0$, we obtain that $x_1=x_4=0$. Consider the second equation with $x_1=x_3=x_4=0$, we have
$$ 0=(\alpha_C -\delta_C - \alpha_C x_2/K) x_2,$$
which leads to two possible non-negative solution $x_2 = 0$ or $x_2= K(\alpha_C-\delta_C)/\alpha_C$ if $\alpha_C >\delta_C$.

If $x_2=Z$, we consider the addition of the second and the third equations with $x_1 =\alpha_1\nu_1x_3/(\delta_V\beta_1)$ and $x_4 = \nu_1 x_3/\beta_1$. If $\delta_C<\alpha_C\left(1-\frac{Z}{K}\right)$, we can find a positive solution for $x_3$,
$$x_3 = \frac{Z\left(\alpha_C\left(1-\frac{Z}{K}\right)-\delta_C\right)}{\nu_1 +\delta_{CV}+(\beta+\nu_1)\alpha_C Z/(\beta_1 K)}.$$
We proved the existence of the homogeneous steady states for the three cases. 
\end{proof}
\begin{prop}
If $\alpha_C< \delta_C$, the solution will approach to $E_0=(0,0,0,0)$ as $t\rightarrow\infty$.
\end{prop}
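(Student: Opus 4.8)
The plan is to exploit the self-limiting structure of the uninfected-cell equation: when $\alpha_C<\delta_C$ the compartment $C$ decays unconditionally, and this decay can then be propagated through the infection cascade $C\to C_V\to C_V^*\to V$. Before anything else I would check that the non-negative orthant is forward invariant, so that all four fields remain $\ge0$: at $V=0$ we have $\partial_t V=\alpha_1C_V^*\ge0$, at $C=0$ we have $\partial_t C=0$, at $C_V=0$ we have $\partial_tC_V=\gamma_1CV\ge0$, and at $C_V^*=0$ we have $\partial_tC_V^*=\nu_1C_V\ge0$. For the reaction--diffusion problem this uses the maximum principle on the $V$-equation; the three cell compartments carry no diffusion and satisfy pointwise ODEs.

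The key step is a differential inequality for $C$. Because $C_T=C+C_V+C_V^*\ge0$ and $V\ge0$, both the logistic saturation and the infection loss are non-positive corrections, so $\partial_tC\le(\alpha_C-\delta_C)C$ pointwise in $\vec{x}$. Since $\alpha_C<\delta_C$, Gr\"onwall's inequality yields $C(t,\vec{x})\le C(0,\vec{x})\,e^{-\mu t}$ with $\mu=\delta_C-\alpha_C>0$, so $C\to0$ exponentially and uniformly in $\vec{x}$ for bounded initial data.

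It then remains to drive the remaining three fields to zero. Integrating the $V$, $C_V$ and $C_V^*$ equations over $\Omega$ and invoking the no-flux boundary condition (so the Laplacian integrates to zero), the total masses $\overline V,\overline{C_V},\overline{C_V^*}$ obey a closed system whose only nonlinear term is $\gamma_1\int_\Omega CV\,d\vec{x}$, and whose autonomous linear part has the triangular matrix
\begin{equation*}
M=\begin{pmatrix}-\delta_V & 0 & \alpha_1\\ 0 & -(\nu_1+\delta_{CV}) & 0\\ 0 & \nu_1 & -\beta_1\end{pmatrix},
\end{equation*}
with eigenvalues $-\delta_V,\,-(\nu_1+\delta_{CV}),\,-\beta_1$, all negative. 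I would introduce a weighted sum $W=a\overline V+b\,\overline{C_V}+c\,\overline{C_V^*}$ and pick $c>a\alpha_1/\beta_1$ and $b>c\nu_1/(\nu_1+\delta_{CV})$ so that every linear coefficient is strictly negative; this gives $\dot W\le-\varepsilon W+b\gamma_1\int_\Omega CV\,d\vec{x}\le\bigl(-\varepsilon+(b\gamma_1/a)\,\|C(t,\cdot)\|_\infty\bigr)W$ for some $\varepsilon>0$, using $\int_\Omega CV\le\|C(t,\cdot)\|_\infty\,\overline V\le\|C(t,\cdot)\|_\infty\,W/a$. Because $\|C(t,\cdot)\|_\infty$ decays exponentially, $\int_0^\infty\|C(s,\cdot)\|_\infty\,ds<\infty$, and Gr\"onwall gives $W(t)\le W(0)\exp(C_*-\varepsilon t)$ with $C_*=b\gamma_1\|C(0,\cdot)\|_\infty/(a\mu)$. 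Hence $W\to0$, so $\overline V,\overline{C_V},\overline{C_V^*}\to0$, and combining with the previous step the solution converges to $E_0$ (in $L^1(\Omega)$, and uniformly for the cell fields). For the spatially homogeneous ODE reading of the statement the Laplacian is simply absent and the identical weighted-sum argument applies verbatim.

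The main obstacle I anticipate is the nonlinear feedback $\gamma_1CV$ in the $C_V$ equation: a priori the loop $C_V\to C_V^*\to V$ could let $V$ grow before $C$ has decayed, so one must rule out a transient blow-up. The resolution, which is the technical heart of the argument, is that every reaction rate other than the infection term is linear, so $V$ can grow only at a fixed exponential rate while its genuine source carries the exponentially small factor $C$; quantifying this through the finite time-integral of $\|C(t,\cdot)\|_\infty$ in the Gr\"onwall estimate closes the gap. A minor secondary point is to justify the maximum-principle and integration-by-parts facts for the $V$-equation, which are standard for the linear reaction--diffusion operator appearing here.
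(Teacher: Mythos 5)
Your proof is correct, and its first half coincides with the paper's: the paper derives the same differential inequality $\partial C/\partial t\le(\alpha_C-\delta_C)C$ and concludes $C(t,\vec{x})\le C(0,\vec{x})e^{(\alpha_C-\delta_C)t}\to0$. Where you genuinely diverge is the treatment of $V$, $C_V$, $C_V^*$: the paper simply asserts that ``through considering the equations for $C_V$, $C_V^*$ and $V$ one by one, it is easy to show'' that all three vanish, with no further detail --- in particular it never addresses the feedback loop $C_V\to C_V^*\to V\to C_V$ (through the infection term $\gamma_1 CV$) that you correctly single out as the technical heart of the statement. Your weighted mass $W=a\overline{V}+b\,\overline{C_V}+c\,\overline{C_V^*}$, with $c>a\alpha_1/\beta_1$ and $b>c\nu_1/(\nu_1+\delta_{CV})$ making the linear part strictly dissipative, and with the nonlinear coupling absorbed via $\int_\Omega CV\,d\vec{x}\le\|C(t,\cdot)\|_\infty W/a$ together with the integrability of $\|C(t,\cdot)\|_\infty$, closes that gap rigorously and yields an explicit decay rate; it is essentially the same Lyapunov-type device (compare the paper's quantity $Y=(\nu_1+\delta_{CV})(\alpha_1C_V^*+\beta_1V)+\alpha_1\nu_1C_V$) that the authors deploy only in the \emph{next} proposition, for convergence to $E_1$. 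In short, your route buys completeness where the paper's ``one by one'' buys brevity but leaves the loop-closure unproved. One minor over-claim on your side: the parenthetical ``uniformly for the cell fields'' requires pointwise (not merely $L^1$) control of $V$ in the source $\gamma_1CV$, which your mass-based Gr\"onwall argument does not directly supply; since the proposition specifies no topology, the $L^1$ conclusion (or an added heat-semigroup smoothing estimate for $V$) is all you need, so this is cosmetic rather than a gap.
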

\begin{proof}
By considering the second equations in Eqs~(\ref{eqn_withoutDIP}), 
$$\frac{\partial C}{\partial t}=\alpha_C C(1-C_T/K)-\gamma_1CV-\delta_CC\leq \alpha_C C-\delta_CC=(\alpha_C-\delta_C)C,$$
which leads to 
$$C(t,\vec{x}) \leq C(0,\vec{x})e^{(\alpha_C-\delta_C)t}.$$
If $\alpha_C-\delta_C<0$, then $C(t,\vec{x})$ will approach to zero as $t\rightarrow\infty$. 

Through considering the equations for $C_V$, $C_V^*$ and $V$ one by one, it is easy to show that when $C(t,\vec{x})$ approaches zero, $C_V$, $C_V^*$ and $V$ all approach zero as $t\rightarrow\infty$.  
\end{proof}
\begin{prop}
If $ \alpha_C\left(1-\frac{Z}{K}\right)< \delta_C <\alpha_C$ and $C(0,\vec{x})>0$, the solution will approach to $E_1=(0,\frac{K(\alpha_C - \delta_C)}{\alpha_C},0,0)$ as $t\rightarrow\infty$.
\end{prop}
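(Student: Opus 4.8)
The plan is to show that the two-sided hypothesis $\alpha_C(1-Z/K)<\delta_C<\alpha_C$ forces the infection to go extinct, after which the uninfected population relaxes to its logistic equilibrium $C_\infty:=K(\alpha_C-\delta_C)/\alpha_C$, so that the solution of Eqs~(\ref{eqn_withoutDIP}) approaches $E_1=(0,C_\infty,0,0)$ (in the ordering $(V,C,C_V,C_V^*)$). First I would record the preliminaries: for nonnegative bounded initial data with $C(0,\cdot)>0$, the reaction terms are quasi-positive and the only diffusing species is $V$ (with no-flux boundary conditions), so a standard invariant-region argument gives global existence together with $V,C,C_V,C_V^*\ge 0$. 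Since $C_V,C_V^*\ge 0$ we have $C_T\ge C$, so pointwise in $\vec x$ the $C$-equation satisfies $\partial_t C\le \alpha_C C(1-C/K)-\delta_C C=(\alpha_C-\delta_C)\,C\,(1-C/C_\infty)$; comparison with the scalar logistic ODE (using $\delta_C<\alpha_C$, so $C_\infty>0$) yields $\limsup_{t\to\infty}C(t,\vec x)\le C_\infty$ uniformly in $\vec x$, because the comparison is a monotone pointwise ODE bound. Hence for every $\epsilon>0$ there is $T$ with $C(t,\vec x)\le C_\infty+\epsilon$ for all $\vec x$ and $t\ge T$.

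The heart of the argument is extinction of the infection, and this is exactly where the threshold enters: the first hypothesis rearranges to $C_\infty<Z$. For $t\ge T$ the triple $(V,C_V,C_V^*)$ is dominated by the linear cooperative system obtained by freezing $C$ at $C_\infty+\epsilon$, whose reaction matrix is the Metzler matrix
$M_\epsilon=\begin{pmatrix}-\delta_V & 0 & \alpha_1\\ \gamma_1(C_\infty+\epsilon) & -(\nu_1+\delta_{CV}) & 0\\ 0 & \nu_1 & -\beta_1\end{pmatrix}$
acting on $(V,C_V,C_V^*)^{\top}$. I would verify that $-M_\epsilon$ is a nonsingular M-matrix by checking its leading principal minors; the first two are $\delta_V>0$ and $\delta_V(\nu_1+\delta_{CV})>0$ automatically, and the only nontrivial one is $\det(-M_\epsilon)=\delta_V\beta_1(\nu_1+\delta_{CV})-\alpha_1\gamma_1\nu_1(C_\infty+\epsilon)$, which is positive precisely when $C_\infty+\epsilon<Z$. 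Thus for $\epsilon$ small, $M_\epsilon$ is Hurwitz, with a strictly negative Perron eigenvalue $\lambda_\epsilon$ and a strictly positive left eigenvector $\xi=(\xi_V,\xi_{C_V},\xi_{C_V^*})$.

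With $\xi$ in hand I would use the Lyapunov functional $W(t)=\int_\Omega(\xi_V V+\xi_{C_V}C_V+\xi_{C_V^*}C_V^*)\,d\vec x$. Differentiating, the diffusion contribution $\xi_V d_V\int_\Omega\nabla^2 V$ vanishes by the no-flux condition, and the frozen-$C$ bound $\partial_t C_V\le \gamma_1(C_\infty+\epsilon)V-(\nu_1+\delta_{CV})C_V$ together with $\xi>0$ gives $\dot W\le \xi^{\top}M_\epsilon\!\int_\Omega(V,C_V,C_V^*)^{\top}=\lambda_\epsilon W$ for $t\ge T$. Hence $W(t)\to 0$ exponentially, and since all integrands are nonnegative and the weights $\xi_i$ strictly positive, $V,C_V,C_V^*\to 0$ in $L^1(\Omega)$; parabolic smoothing for $V$ and the explicit ODE bounds for $C_V,C_V^*$ then upgrade this to uniform convergence. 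Finally, once $V,C_V,C_V^*\to 0$ the $C$-equation becomes an asymptotically autonomous logistic equation $\partial_t C=(\alpha_C-\delta_C)C(1-C/C_\infty)+o(1)$; the matching lower bound $\liminf_t C\ge C_\infty$ follows because, for small infection terms, the bracket $\alpha_C(1-C_T/K)-\gamma_1 V-\delta_C$ stays positive while $C<C_\infty$, so $C\to C_\infty$ and the full state approaches $E_1$.

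I anticipate the extinction step to be the main obstacle: choosing the correct Lyapunov weight. Because $M_\epsilon$ is non-symmetric, a naive energy $\int(V^2+C_V^2+(C_V^*)^2)$ need not decay, and it is the left Perron eigenvector of the Metzler matrix that makes the reaction contribution collapse to $\lambda_\epsilon W$. The two places requiring genuine care are tying the sign of $\det(-M_\epsilon)$ to the hypothesis through the identity $C_\infty<Z$, and confirming that diffusion with no-flux data does not spoil the Hurwitz property (the spatially homogeneous mode reproduces $M_\epsilon$, while higher modes only shift the spectrum further into the left half-plane), so that the integrated functional $W$ suffices.
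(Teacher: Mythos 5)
Your proposal is correct, and it shares the paper's overall skeleton: a pointwise logistic comparison pushes $C$ below the threshold $Z$ (the paper shows $C<Z$ eventually; you show the sharper $\limsup_{t\to\infty}C\le C_\infty<Z$, both being readings of the hypothesis $\alpha_C(1-Z/K)<\delta_C$); then a spatially integrated \emph{linear} functional of $(V,C_V,C_V^*)$, whose diffusion contribution vanishes under the no-flux condition, forces extinction of the infection; finally the $C$-equation relaxes to its logistic equilibrium. Where you genuinely depart from the paper is in the choice of weights for that functional, and your choice is the stronger one. The paper takes $Y=(\nu_1+\delta_{CV})(\alpha_1 C_V^*+\beta_1 V)+\alpha_1\nu_1 C_V$, whose weight vector is exactly the left \emph{null} eigenvector of your matrix frozen at $C=Z$, so that $\frac{d}{dt}\int_\Omega Y\,dA=\alpha_1\nu_1\gamma_1\int_\Omega(C-Z)V\,dA\le 0$; this yields only monotone decrease, and the paper's ensuing case analysis (``if $V$ is not zero \dots\ if not, $Y$ will approach to zero'') silently passes from ``decreasing'' to ``tending to zero''. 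You instead freeze $C$ at $C_\infty+\epsilon<Z$, verify through the determinant $\delta_V\beta_1(\nu_1+\delta_{CV})-\alpha_1\gamma_1\nu_1(C_\infty+\epsilon)>0$ that $-M_\epsilon$ is a nonsingular M-matrix (this is precisely $C_\infty+\epsilon<Z$, so the hypothesis is used exactly where it should be), and take the strictly positive left Perron eigenvector of the irreducible Metzler matrix $M_\epsilon$ to get $\dot W\le\lambda_\epsilon W$ with $\lambda_\epsilon<0$. That buys exponential extinction with an explicit rate and closes the soft spot in the paper's argument; what the paper's choice buys in return is an exact algebraic identity with no $\epsilon$-bookkeeping and no appeal to M-matrix or Perron--Frobenius theory.

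Two small technical points. In two dimensions the $L^1\to L^\infty$ heat-semigroup bound behaves like $\tau^{-1}$ near $\tau=0$ and is not time-integrable, so your ``parabolic smoothing'' upgrade for $V$ should pass through an intermediate space (e.g.\ $L^1\to L^2\to L^\infty$); once $V\to 0$ uniformly, $C_V$ and $C_V^*$ indeed follow from the pointwise ODEs, and your bracket argument for $\liminf_{t\to\infty}C\ge C_\infty$ --- which is where the hypothesis $C(0,\vec x)>0$ actually enters --- finishes the proof at a level of rigor at least matching, and in the extinction step exceeding, the paper's.
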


\begin{proof}
Let $f(y_1,y_2) = (\alpha_C (1-y_1/K)-\gamma_1y_2-\delta_C)$. By $\alpha_C\left(1-\frac{Z}{K}\right)< \delta_C$, we observe that $f(y_1,y_2) <0$ for any $y_1>Z$ and $y_2>0$.  If $C\geq Z$, 
$$\frac{\partial C}{\partial t}=f(C_T,V)C\leq f(C,0)C<0.$$
It implies that $C(t,\vec{x})< Z$ as $t\rightarrow \infty$. 

Define $Y = (\nu_1+\delta_{CV})(\alpha_1C_V^*+ \beta_1 V)+\alpha_1\nu_1 C_V$. Consider the derivative of $Y$ with respect to $t$,
$$\frac{\partial Y}{\partial t} =\beta_1(\nu_1+\delta_{CV}) d_V\nabla^2V + (C-Z)V\alpha_1\nu_1\gamma_1.$$
Take integration over the spatial domain $\Omega$, we can obtain
$$\frac{\partial \int_\Omega Y dA}{\partial t} =\int_\Omega (C-Z)V\alpha_1\nu_1\gamma_1 dA.$$
If $V$ is not zero, the right-hand side becomes negative as $t\rightarrow \infty$, then $Y$ will decrease. 
If there exists $T>0$ such that $V$ becomes zero for $t>T$, it is easy to show that $C_V$ and $C_V^*$ approach to zero; if not, $Y$ will approach to zero and It implies that $V$, $C_V$ and $C_V^*$ approach to zero as $t\rightarrow \infty$. 
When $t\rightarrow \infty$, $V$, $C_V$ and $C_V^*$ approach to zero and
$$\frac{\partial C}{\partial t}=\alpha_C C(1-C/K)-\delta_C C$$
If $C$ is not zero with the condition $\alpha_C\left(1-\frac{Z}{K}\right)< \delta_C <\alpha_C$, the equation above implies that $C\rightarrow\frac{K(\alpha_C - \delta_C)}{\alpha_C}$ as $t\rightarrow \infty$.

\end{proof}
Our set of parameters does not meet the conditions of the propositions above. Therefore it can guarantee that species in the system will not become extinct in the following simulations.

\subsection*{Figures}
\begin{figure}[H]
\includegraphics[width=0.85\textwidth]{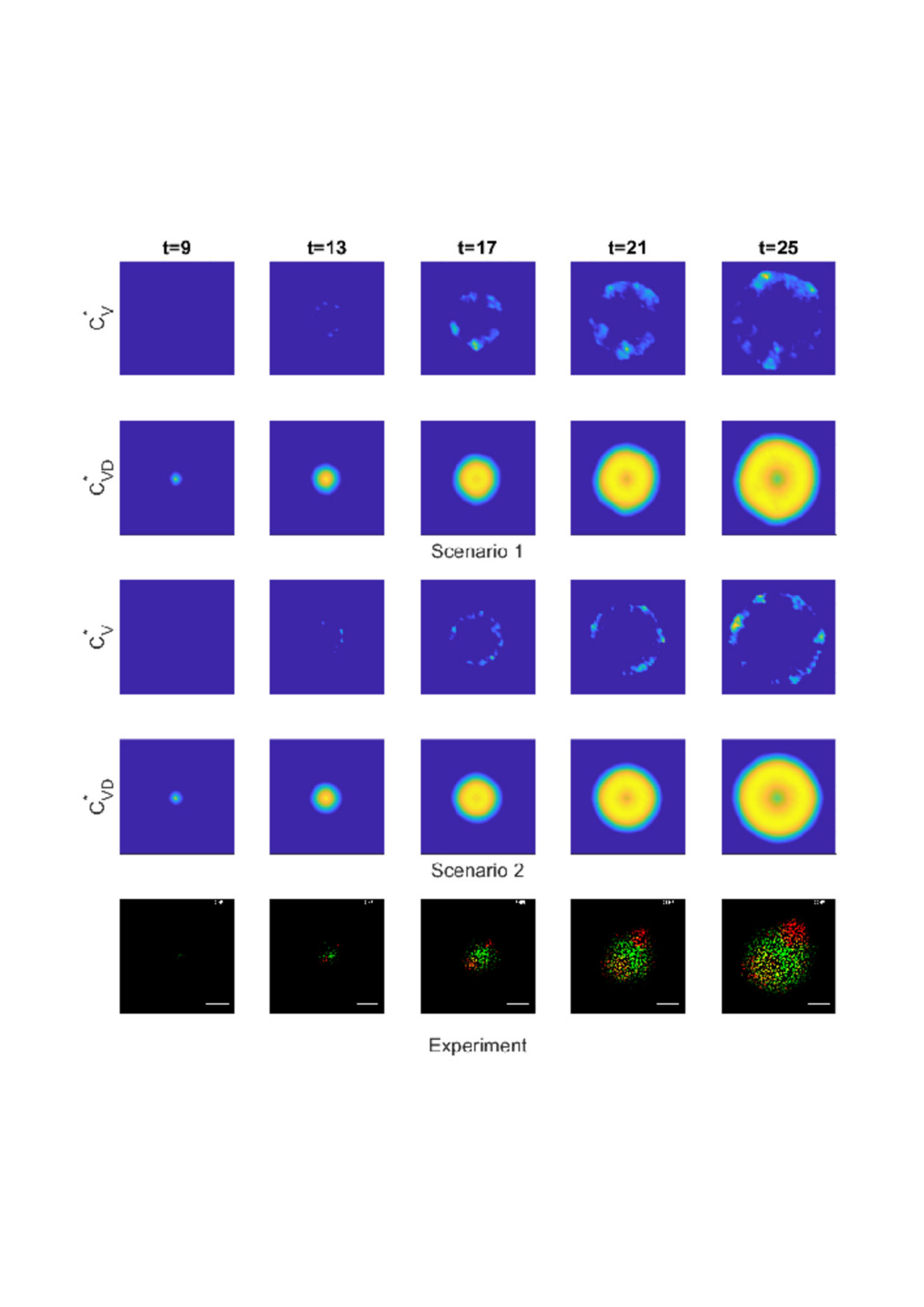}
\caption{{\bf Dynamics of virus and DIP in cells in 2 scenarios simulations with $C_{VD22,22}(0)=200$ and representative experiment with initial DIP equal to 84.}
Row 1, 2 are time series plots of virus in cells ($C^*_V$) and DIP in cells ($C^*_{VD}$) growth in Scenario 1 (infected cells produce virus and DIPs through cell bursting); 
Row 3, 4 are time series plots of those in Scenario 2 (infected cells keep producing viruses and DIPs); Row 5 is the representative experimental results.}
\label{case4}
\end{figure}

\begin{figure}[H]
\includegraphics[width=0.85\textwidth]{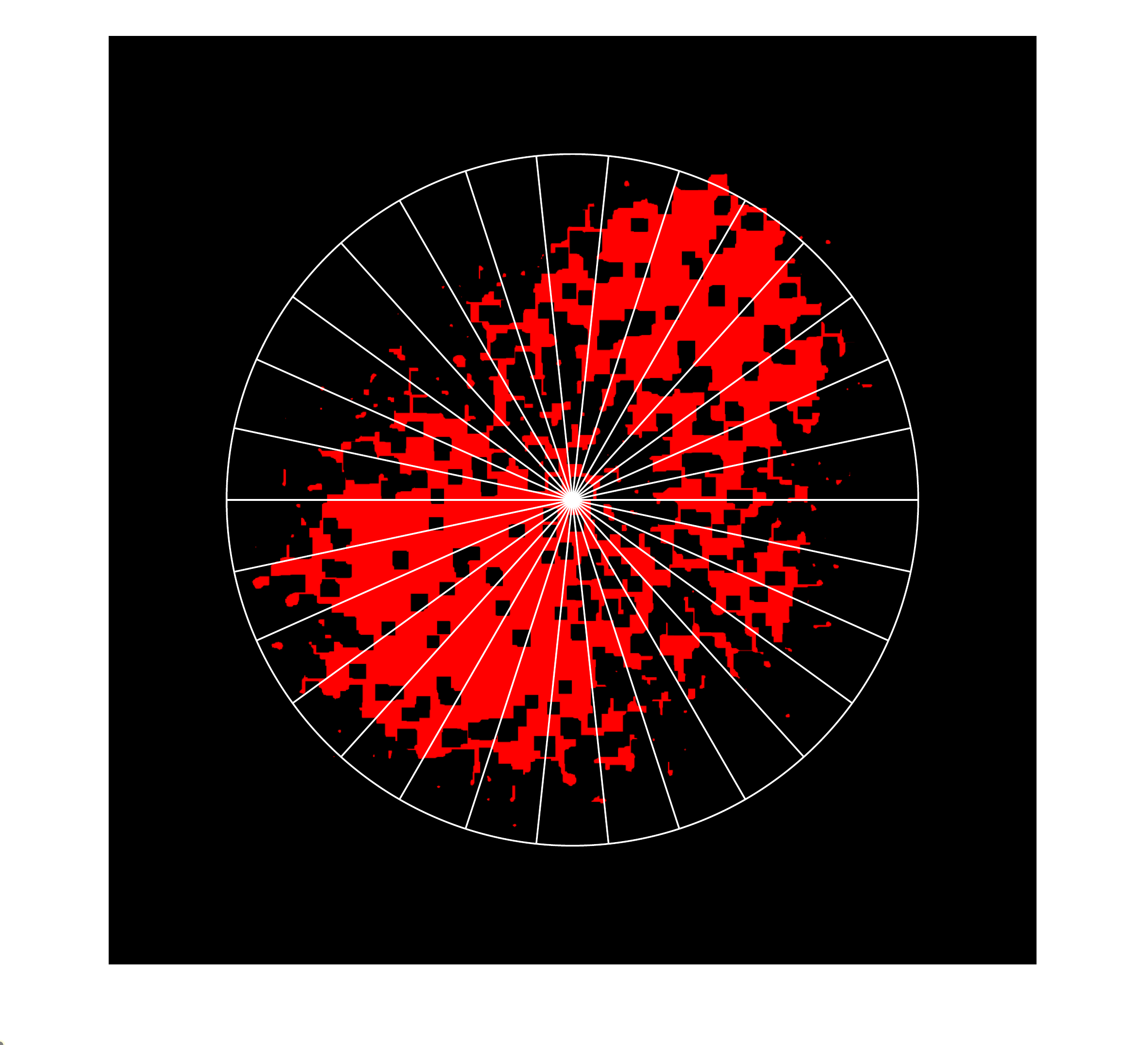}
\caption{{\bf Strata used in the computation of $q$-statistics.}}
\label{stratas}
\end{figure}

\begin{figure}[H]
\includegraphics[width=0.85\textwidth]{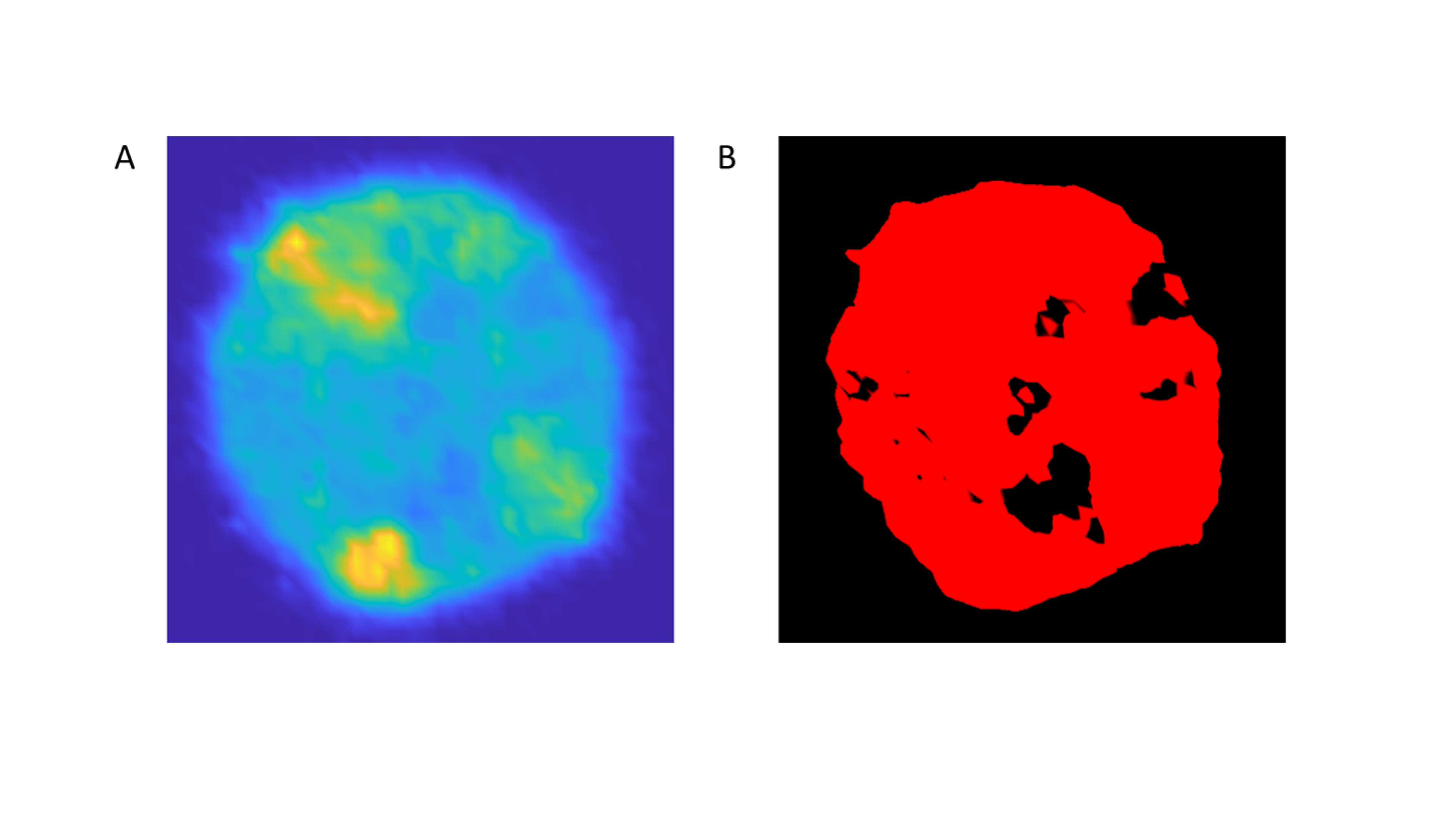}
\caption{{\bf A representative example to illustrate the image processing of simulation results for computing the $q$-statistic.}
A: the spatial distribution of $C_V^*$ in 2D with the same color bar as Fig~\ref{case1}-\ref{case3}, which is then converted to binary. B: all points where $C^*_V<50$ are black ((0,0,0) in RGB); otherwise, they are red ((255,0,0) in RGB).}
\label{simu_binary}
\end{figure}

\section*{Acknowledgments}
This work was supported by RGC General Research Fund (Project No. CityU 11301821).
The authors are grateful to Dr. John Yin for helpful discussions and for sharing the data in \cite{Baltes2017}.

%
%
%


\begin{thebibliography}{10}

\bibitem{Frensing2015}
Frensing T.
\newblock {Defective interfering viruses and their impact on vaccines and viral
  vectors}.
\newblock Biotechnology Journal. 2015;10(5):681--689.
\newblock doi:{10.1002/biot.201400429}.

\bibitem{Pesko2012}
Pesko KN, Fitzpatrick KA, Ryan EM, Shi PY, Zhang B, Lennon NJ, et~al.
\newblock {Internally deleted WNV genomes isolated from exotic birds in New
  Mexico: Function in cells, mosquitoes, and mice}.
\newblock Virology. 2012;427(1):10--17.
\newblock doi:{10.1016/j.virol.2012.01.028}.

\bibitem{Baltes2017}
Baltes A, Akpinar F, Inankur B, Yin J.
\newblock {Inhibition of infection spread by co-transmitted defective
  interfering particles}.
\newblock PLoS ONE. 2017;12(9):1--17.
\newblock doi:{10.1371/journal.pone.0184029}.

\bibitem{akpinar2016high}
Akpinar F, Timm A, Yin J.
\newblock High-throughput single-cell kinetics of virus infections in the
  presence of defective interfering particles.
\newblock Journal of virology. 2016;90(3):1599--1612.

\bibitem{rezelj2021defective}
Rezelj VV, Carrau L, Merwaiss F, Levi LI, Erazo D, Tran QD, et~al.
\newblock Defective viral genomes as therapeutic interfering particles against
  flavivirus infection in mammalian and mosquito hosts.
\newblock Nature Communications. 2021;12(1):1--14.

\bibitem{dropulic1996conditionally}
Dropuli{\'c} B, H{\u{e}}rm{\'a}nkov{\'a} M, Pitha PM.
\newblock A conditionally replicating HIV-1 vector interferes with wild-type
  HIV-1 replication and spread.
\newblock Proceedings of the National Academy of Sciences.
  1996;93(20):11103--11108.

\bibitem{noble2004interfering}
Noble S, McLain L, Dimmock NJ.
\newblock Interfering vaccine: a novel antiviral that converts a potentially
  virulent infection into one that is subclinical and immunizing.
\newblock Vaccine. 2004;22(23-24):3018--3025.

\bibitem{li2021dengue}
Li D, Lin MH, Rawle DJ, Jin H, Wu Z, Wang L, et~al.
\newblock Dengue virus-free defective interfering particles have potent and
  broad anti-dengue virus activity.
\newblock Communications biology. 2021;4(1):1--11.

\bibitem{Wodarz2012}
Wodarz D, Hofacre A, Lau JW, Sun Z, Fan H, Komarova NL.
\newblock {Complex spatial dynamics of oncolytic viruses in vitro: Mathematical
  and experimental approaches}.
\newblock PLoS Computational Biology. 2012;8(6):e1002547.
\newblock doi:{10.1371/journal.pcbi.1002547}.

\bibitem{Saenz2010}
Saenz RA, Quinlivan M, Elton D, MacRae S, Blunden AS, Mumford JA, et~al.
\newblock {Dynamics of Influenza Virus Infection and Pathology}.
\newblock Journal of Virology. 2010;84(8):3974--3983.
\newblock doi:{10.1128/jvi.02078-09}.

\bibitem{Getto2008}
Getto P, Kimmel MK, Marciniak-Czochra AK.
\newblock {Modelling and analysis of dynamics of viral infection of cells and
  of interferon resistance}.
\newblock Journal of Mathematical Analysis and Applications.
  2008;344(2):821--850.
\newblock doi:{10.1016/j.jmaa.2008.02.069}.

\bibitem{Perelson2002}
Perelson AS.
\newblock {Modelling viral and immune system dynamics}.
\newblock Nature Reviews Immunology. 2002;2(1):28--36.
\newblock doi:{10.1038/nri700}.

\bibitem{Heldt2015}
Heldt FS, Kupke SY, Dorl S, Reichl U, Frensing T.
\newblock {Single-cell analysis and stochastic modelling unveil large
  cell-to-cell variability in influenza A virus infection}.
\newblock Nature Communications. 2015;6:8938.
\newblock doi:{10.1038/ncomms9938}.

\bibitem{Perelson2013}
Perelson AS, Ribeiro RM.
\newblock {Modeling the within-host dynamics of HIV infection}.
\newblock BMC Biology. 2013;11(1):96.
\newblock doi:{10.1186/1741-7007-11-96}.

\bibitem{Pawelek2012}
Pawelek KA, Huynh GT, Quinlivan M, Cullinane A, Rong L, Perelson AS.
\newblock {Modeling within-host dynamics of influenza virus infection including
  immune responses}.
\newblock PLoS Computational Biology. 2012;8(6):e1002588.
\newblock doi:{10.1371/journal.pcbi.1002588}.

\bibitem{Graw2016}
Graw F, Perelson AS.
\newblock {Modeling Viral Spread}.
\newblock Annual Review of Virology. 2016;3(1):555--572.
\newblock doi:{10.1146/annurev-virology-110615-042249}.

\bibitem{Whitman2020}
Whitman J, Dhanji A, Hayot F, Sealfon SC, Jayaprakash C.
\newblock {Spatio-temporal dynamics of Host-Virus competition: A model study of
  influenza A}.
\newblock Journal of Theoretical Biology. 2020;484:110026.
\newblock doi:{10.1016/j.jtbi.2019.110026}.

\bibitem{Yin2018}
Yin J, Redovich J.
\newblock {Kinetic Modeling of Virus Growth in Cells}.
\newblock Microbiology and Molecular Biology Reviews. 2018;82(2):1--33.
\newblock doi:{10.1128/mmbr.00066-17}.

\bibitem{Kirkwood1994}
Kirkwood TBL, Bangham CRM.
\newblock {Cycles, chaos, and evolution in virus cultures: A model of defective
  interfering particles}.
\newblock Proceedings of the National Academy of Sciences of the United States
  of America. 1994;91(18):8685--8689.
\newblock doi:{10.1073/pnas.91.18.8685}.

\bibitem{Frank2000}
Frank SA.
\newblock {Within-host spatial dynamics of viruses and defective interfering
  particles}.
\newblock Journal of Theoretical Biology. 2000;206(2):279--290.
\newblock doi:{10.1006/jtbi.2000.2120}.

\bibitem{Akpinar2016a}
Akpinar F, Inankur B, Yin J.
\newblock {Spatial-Temporal Patterns of Viral Amplification and Interference
  Initiated by a Single Infected Cell}.
\newblock Journal of Virology. 2016;90(16):7552--7566.
\newblock doi:{10.1128/jvi.00807-16}.

\bibitem{Laske2016}
Laske T, Heldt FS, Hoffmann H, Frensing T, Reichl U.
\newblock {Modeling the intracellular replication of influenza A virus in the
  presence of defective interfering RNAs}.
\newblock Virus Research. 2016;213:90--99.
\newblock doi:{10.1016/j.virusres.2015.11.016}.

\bibitem{Mapder2019}
Mapder T, Clifford S, Aaskov J, Burrage K.
\newblock {A population of bang-bang switches of defective interfering
  particles makes withinhost dynamics of dengue virus controllable}.
\newblock PLoS Computational Biology. 2019;15(11):1--24.
\newblock doi:{10.1371/journal.pcbi.1006668}.

\bibitem{Saxena2018}
Saxena A, Byram PK, Singh SK, Chakraborty J, Murhammer D, Giri L.
\newblock {A structured review of baculovirus infection process: Integration of
  mathematical models and biomolecular information on cell--virus interaction}.
\newblock Journal of General Virology. 2018;99(9):1151--1171.
\newblock doi:{10.1099/jgv.0.001108}.

\bibitem{Pearson2011}
Pearson JE, Krapivsky P, Perelson AS.
\newblock {Stochastic theory of early viral infection: Continuous versus burst
  production of virions}.
\newblock PLoS Computational Biology. 2011;7(2):e1001058.
\newblock doi:{10.1371/journal.pcbi.1001058}.

\bibitem{Immonen2012}
Immonen T, Gibson R, Leitner T, Miller MA, Arts EJ, Somersalo E, et~al.
\newblock {A hybrid stochastic-deterministic computational model accurately
  describes spatial dynamics and virus diffusion in HIV-1 growth competition
  assay}.
\newblock Journal of Theoretical Biology. 2012;312:120--132.
\newblock doi:{10.1016/j.jtbi.2012.07.005}.

\bibitem{Clark2011}
Clark NR, Tapia KA, Dandapani A, MacArthur BD, Lopez C, Ma`ayan A.
\newblock {Stochastic model of virus and defective interfering particle spread
  across mammalian cells with immune response}.
\newblock arxiv. 2011; p. 1108.4901.

\bibitem{smith2018spatially}
Smith CA, Yates CA.
\newblock Spatially extended hybrid methods: a review.
\newblock Journal of the Royal Society Interface. 2018;15(139):20170931.

\bibitem{lo2016hybrid}
Lo WC, Zheng L, Nie Q.
\newblock A hybrid continuous-discrete method for stochastic
  reaction--diffusion processes.
\newblock Royal Society open science. 2016;3(9):160485.

\bibitem{lo2019hybrid}
Lo WC, Mao S.
\newblock A hybrid stochastic method with adaptive time step control for
  reaction--diffusion systems.
\newblock Journal of Computational Physics. 2019;379:392--402.

\bibitem{yates2015pseudo}
Yates CA, Flegg MB.
\newblock The pseudo-compartment method for coupling partial differential
  equation and compartment-based models of diffusion.
\newblock Journal of The Royal Society Interface. 2015;12(106):20150141.

\bibitem{gillespie1977exact}
Gillespie DT.
\newblock Exact stochastic simulation of coupled chemical reactions.
\newblock The journal of physical chemistry. 1977;81(25):2340--2361.

\bibitem{spill2015hybrid}
Spill F, Guerrero P, Alarcon T, Maini PK, Byrne H.
\newblock Hybrid approaches for multiple-species stochastic reaction--diffusion
  models.
\newblock Journal of computational physics. 2015;299:429--445.

\bibitem{wang2016measure}
Wang JF, Zhang TL, Fu BJ.
\newblock A measure of spatial stratified heterogeneity.
\newblock Ecological indicators. 2016;67:250--256.

\end{thebibliography}
\end{document}